        \def\theequation{\thesection.\arabic{equation}}
\newcommand{\Si}{\Sigma}
\newcommand{\tr}{{\rm tr}}
\newcommand{\Ad}{{\rm Ad}}
\newcommand{\ti}[1]{\tilde{#1}}
\newcommand{\om}{\omega}
\newcommand{\Om}{\Omega}
\newcommand{\de}{\delta}
\newcommand{\al}{\alpha}
\newcommand{\vth}{\vartheta}
\newcommand{\be}{\beta}
\newcommand{\ep}{\epsilon}
\newcommand{\vf}{\varphi}
\newcommand{\ka}{\kappa}
\newcommand{\ga}{\gamma}
\def\bea{\begin{eqnarray}\new\begin{array}{cc}}
\def\ee{\end{array}\end{eqnarray}}
\newcommand{\beq}[1]{\begin{equation}\label{#1}}
\newcommand{\eq}{\end{equation}}
\newcommand{\beqn}[1]{\begin{small} \begin{eqnarray}\label{#1}}
\newcommand{\eqn}{\end{eqnarray} \end{small}}
\newcommand{\p}{\partial}
\def\sq2{\sqrt{2}}
\newcommand{\GLN}{{\rm GL}(N,{\mathbb C})}
\def\sl2{{\rm sl}(2, {\mathbb C})}
\def\f1#1{\frac{1}{#1}}
\newcommand{\bz}{\bar{z}}
\newcommand{\bA}{\bar{A}}
\def\mC{{\mathbb C}}
\def\mZ{{\mathbb Z}}
\def\mR{{\mathbb R}}
\def\frak{\mathfrak}
\def\gg{{\frak g}}
\newcommand{\Mat}{ {\rm Mat}(N,\mathbb C) }
\def\clA{\mathcal{A}}
\def\clC{\mathcal{C}}
\def\clD{\mathcal{D}}
\def\clG{\mathcal{G}}
\def\clO{\mathcal{O}}
\def\clH{\mathcal{H}}
\def\clK{\mathcal{K}}
\def\clL{\mathcal{L}}
\def\clM{\mathcal{M}}
\def\clX{\mathcal{X}}
\def\clY{\mathcal{Y}}
\def\bag2{{\bf g_2}}
\def\bas8{{\bf so(8)}}
\def\sr2{\sqrt{2}}
\newcommand{\ran}{\rangle}
\newcommand{\lan}{\langle}
\def\f1#1{\frac{1}{#1}}
\newtheorem{predl}{Proposition}[section]
\newtheorem{theor}{Theorem}[section]
\newenvironment{proof}{\par\noindent{\bf Proof.}}{\hfill$\scriptstyle\blacksquare$}
\def\Ad{{\rm Ad}}
\def\res{\mathop{\hbox{Res}}\limits}
\newcommand{\mats}[4]{\left(\begin{array}{cc}{#1}&{#2}\\ {#3}&{#4}
\end{array}\right)}
\begin{document}

\setcounter{page}{1}

\begin{center}

\

\vspace{5mm}

{\LARGE{\bf Integrable deformations of principal chiral model from}}

\vspace{3mm}

{\LARGE{\bf solutions  of associative Yang-Baxter equation}}



 \vspace{20mm}

 {\Large {D. Domanevsky}}\footnote{Institute for Theoretical and Mathematical Physics, Lomonosov Moscow State University, Moscow, 119991, Russia;
 e-mail: danildom09@gmail.com.}
\quad
 {\Large {A. Levin}}\footnote{National Research University Higher School of Economics, Russian Federation,
 Usacheva str. 6,  Moscow, 119048, Russia;
 National Research Centre ''Kurchatov Institute'',
Academician Kurchatov square, 1, Moscow, 123182, Russia;
e-mail: alevin57@gmail.com.
 }
\quad
 {\Large {M. Olshanetsky}}\footnote{National Research Centre ''Kurchatov Institute'',
Academician Kurchatov square, 1, Moscow, 123182, Russia; Institute for Information Transmission Problems RAS (Kharkevich Institute),
Bolshoy Karetny per. 19, Moscow, 127994, Russia; e-mail: mikhail.olshanetsky@gmail.com.}
\quad
 {\Large {A. Zotov}}\footnote{Steklov Mathematical Institute of Russian
Academy of Sciences, Gubkina str. 8, 119991, Moscow, Russia; Institute for Theoretical and Mathematical Physics, Lomonosov Moscow State University, Moscow, 119991, Russia; National Research Centre ''Kurchatov Institute'',
Academician Kurchatov square, 1, Moscow, 123182, Russia; e-mail: zotov@mi-ras.ru.}

%
%
%
%
%
%



\end{center}

\vspace{10mm}

\begin{abstract}
We describe deformations of the classical principle chiral model and 1+1 Gaudin model
related to ${\rm GL}_N$ Lie group. The deformations are generated
by  $R$-matrices satisfying the associative Yang-Baxter equation. Using the coefficients
of the expansion for these $R$-matrices we
derive equations of motion based on a certain ansatz for $U$-$V$ pair satisfying the
Zakharov-Shabat equation. Another deformation comes from the twist function, which
we identify with the cocentral charge in the affine Higgs bundle underlying the Hitchin
approach to 2d integrable models.
\end{abstract}

\bigskip

\newpage
\bigskip
{\small{ \tableofcontents }}



\section{Introduction}\label{sec1}
\setcounter{equation}{0}

\paragraph{1+1 integrable field theories.} In this paper we consider two-dimensional (or, equivalently 2d or 1+1) integrable field theories of certain type.
Let us begin with two widely known examples. The first one is the Landau-Lifshitz model \cite{LL,Skl}. It is a model of an anisotropic ferromagnetic solid, described by equation
  \beq{q01}
  \begin{array}{l}
  \displaystyle{
 \p_t {\vec S}={\ti c}_1 {\vec S}\times J({\vec S})+{\ti c}_2 {\vec S}\times \p_x^2{\vec S}\,,
 \qquad J({\vec S})=(J_1S_1,J_2S_2,J_3S_3)\,,
 }
 \end{array}
 \eq
where $\vec{S}(t,x)=(S_1(t,x),S_2(t,x),S_3(t,x))\in\mC^3$ is a magnetization vector, ${\ti c}_1\,,{\ti c}_2$ and $J_1\,,J_2\,,J_3$ are some constants. The components of $\vec{S}(t,x)$ are fields on the two-dimensional space-time, $x$ is a space coordinate on a circle, and $t\in\mR$ is a time variable. The periodic boundary conditions
are assumed: ${\vec S}(t,x)={\vec S}(t,x+2\pi)$. We will deal with the
matrix form of the Landau-Lifshitz equation (\ref{q01}):
  \beq{q001}
  \begin{array}{l}
  \displaystyle{
 \p_t {S}={c}_1 [{S}, J({S})]+{c}_2 [{S}, \p_x^2{S}]\,,
 \qquad S=\sum\limits_{k=1}^3S_k\sigma_k\,,\quad J(S)=\sum\limits_{k=1}^3S_k J_k\sigma_k\,,
 }
 \end{array}
 \eq
 where $S$ is a traceless $2\times 2$ matrix, $S_k$ -- its components
 in the basis of the Pauli matrices $\sigma_k$ and $c_{1,2}=2\imath{\ti c}_{1,2}$.

The second example is the principal chiral model \cite{ZM,FR,Chered,FTbook}, which we write in the form (see \cite{Z11}):
\beq{q02}
\left\{ \begin{array}{l}
\p_tl_1-k\p_x l_0+\frac{2}{z_1-z_2}[l_1,l_0]=0\,,
\\ \ \\
\p_tl_0-k\p_xl_1=0\,,
\end{array}\right.
 \eq
where $l_0=l_0(t,x)\,,l_1=l_1(t,x)\in\Mat$ and $z_1,z_2,k\in\mC$. When $z_1=1$, $z_2=-1$
and $l_0=\p_t g g^{-1}$, $l_1=k\p_x g g^{-1}$ for some $g=g(t,x)\in\Mat$,
the upper equation become an identity, and the lower equation is the equation of motion of the
principal chiral model. We rewrite (\ref{q02}) in terms of
$S^1=\frac{1}{2}(l_0+l_1)$ and $S^2=\frac{1}{2}(l_0-l_1)$ as
\beq{q03}
\left\{\begin{array}{l}
\p_t S^1-k\p_x S^1=-\frac{2}{z_1-z_2}[S^1,S^2]\,,
\\ \ \\
\p_t S^2+k\p_x S^2=\frac{2}{z_1-z_2}[S^1,S^2]\,.
\end{array}\right. \eq

Both systems of equations (\ref{q01})--(\ref{q001}) and (\ref{q02})--(\ref{q03}) are represented in the
form of the Zakharov-Shabat equation
\cite{ZaSh,FTbook}:
  \beq{q04}
  \begin{array}{l}
  \displaystyle{
 \p_t U(z)-k\p_x V(z)=[U(z),V(z)]\,,
 }
 \end{array}
 \eq
 where $U(z),V(z)$ is a pair of matrix-valued functions of $t,x$,
  and $z\in\mC$ is the spectral parameter. The representation (\ref{q04})
  leads to integrability by means of the classical inverse scattering method.

\paragraph{Finite-dimensional mechanics.} In the limit when all the fields are independent of the space variable $x$, the
field theory equations of motion become ordinary differential equations describing some
finite-dimensional mechanics. This mechanics is integrable since the limit corresponds to $k=0$,
thus giving the Lax equations from (\ref{q04}):
  \beq{q05}
  \begin{array}{l}
  \displaystyle{
 {\dot L}(z)=[L(z),M(z)]\,,
 }
 \end{array}
 \eq
 where $L(z)$ is the limit of $U(z)$, and $M(z)$ is the limit of $V(z)$. In particular,
 the equation (\ref{q001}) in this limit takes the form (with $c_1=1$)
  \beq{q06}
  \begin{array}{l}
  \displaystyle{
  {\dot S}= [{S}, J({S})]\,,
 }
 \end{array}
 \eq
 which is an equation of motion for the (complexified) Euler top, where
 $S_1,S_2,S_3$ are angular momenta and $J_1$, $J_2$, $J_3$ are inverse components
 of the inertia tensor written in the principle axes. Similarly, the equations (\ref{q03})
 turn into
\beq{q07}
\begin{array}{l}
{\dot S}^1=-{\dot S}^2=-\frac{2}{z_1-z_2}[S^1,S^2]\,,
\end{array}
\eq
which is an equation for the rational Gaudin model with two marked points ${z_1,z_2}$ on a Riemann sphere.
The upper given Gaudin model can be generalized to trigonometric and elliptic cases with an arbitrary number of
marked points \cite{STS}. The Euler top (\ref{q06}) can be generalized to higher rank integrable cases.
Then the matrix $S\in\Mat$ is of an arbitrary size, and $J(S)$ is some linear map. Equation
(\ref{q06}) for $S\in\Mat$ is called the Euler-Arnold top on $\GLN$ Lie group.

\paragraph{Construction by $R$-matrices.} In this paper we use construction of integrable Euler-Arnold type tops obtained by a
special class of  quantum $R$-matrices.
In the general case a quantum (non-dynamical) $R$-matrix $R^{\hbar}_{12}(z_1,z_2)$ is a solution of the quantum Yang-Baxter equation (QYBE)
  \beq{q08}
  \begin{array}{l}
  \displaystyle{
 R^{\hbar}_{12}(z_1,z_2)  R^{\hbar}_{13}(z_1,z_3) R^{\hbar}_{23}(z_2,z_3) =
 R^{\hbar}_{23}(z_2,z_3)  R^{\hbar}_{13}(z_1,z_3)  R^{\hbar}_{12}(z_1,z_2)\,.
 }
 \end{array}
 \eq
 Being written in the fundamental representation of ${\rm GL}_N$ Lie group $R$-matrix become
 $\Mat^{\otimes 2}$-valued function of
 the Planck constant $\hbar$ and the spectral parameters $z_1,z_2$ (in fact, in our case $R^{\hbar}_{12}(z_1,z_2)=R^{\hbar}_{12}(z_1-z_2)$). That is
  \beq{q09}
  \begin{array}{l}
  \displaystyle{
  R^\hbar_{12}(z)=\sum\limits_{ijkl=1}^N R_{ij,kl}(\hbar,z)E_{ij}\otimes E_{kl}\in\Mat^{\otimes 2}\,,
   }
 \end{array}
 \eq
 where $E_{ij}$ are matrix units in $\Mat$ and $R_{ij,kl}(\hbar,z)$ is a set of functions.
 In the quantum Yang-Baxter equation (\ref{q08}) $R^{\hbar}_{ab}(z_a,z_b)=R^{\hbar}_{ab}(z_a-z_b)$ is understood as an element of $\Mat^{\otimes 3}$ acting non-trivially on the $a$-th and $b$-th tensor components only.
 We consider a class of quantum $R$-matrices satisfying also the
 associative Yang-Baxter equation \cite{FK}:
  \beq{q10}
  \begin{array}{c}
    \displaystyle{
  R^{\hbar}_{12} R^{\eta}_{23} = R^{\eta}_{13} R^{\hbar-\eta}_{12} + R^{\eta-\hbar}_{23} R^{\hbar}_{13}\,,
   \qquad R^u_{ab} = R^u_{ab}(z_a-z_b)\,.
 }
 \end{array}
 \eq
 In contrast to (\ref{q08}) the associative Yang-Baxter equation (AYBE) is a non-trivial functional equation
 even in the scalar case (when $N=1$):
  \beq{q11}
  \begin{array}{c}
  \displaystyle{
\phi(\hbar,z_{12})\phi(\eta,z_{23})=
\phi(\hbar-\eta,z_{12})\phi(\eta,z_{13})+\phi(\eta-\hbar,z_{23})\phi(\hbar,z_{13})\,,\quad z_{ab}=z_a-z_b\,.
 }
 \end{array}
 \eq
 In this case $R$-matrix become a function satisfying the relation
 (\ref{q11}) known as the genus one Fay identity. It has
 a solution given by the elliptic Kronecker function \cite{Weil}:
  \beq{q12}
  \begin{array}{l}
  \displaystyle{
 \phi(\hbar,z)=\frac{\vth'(0)\vth(\hbar+z)}{\vth(\hbar)\vth(z)}\,,
 \quad
 \vth(z)=-\sum _{k\in \mathbb Z} \exp
\left ( \pi \imath \tau (k+\frac{1}{2})^2 +2\pi \imath
(z+\frac{1}{2})(k+\frac{1}{2})\right )\,.
 }
 \end{array}
 \eq
where $\vth(z)$ is the first Jacobi theta-function. In the general case the QYBE (\ref{q08}) and the AYBE (\ref{q10})
have different (but intersecting) sets of solutions. At the same time, a certain class of solutions of AYBE satisfying
additional properties (see below the skew-symmetry (\ref{q34}) and the unitarity (\ref{q35})), satisfies also the QYBE.
In this paper we consider exactly these type solutions of AYBE, that is we consider a subset of solutions of QYBE satisfying
also AYBE. For this reason we refer to any solution of (\ref{q10}) as to $R$-matrix. This is important for our study since AYBE can be used for constructing different type integrable
structures (see below) likewise the Fay identity (\ref{q11}) is used for constructing the Lax equations and other
important objects.

Summarizing, the quantum $R$-matrix
satisfying AYBE (\ref{q10}) can be viewed as $\Mat^{\otimes 2}$ valued matrix generalization
of the Kronecker elliptic function (\ref{q12}), see \cite{Pol} and \cite{LOZ15}. This point of view leads to
a problem of description of non-commutative generalizations of the elliptic function identities \cite{Z16}.

AYBE can be used for constructing Lax pairs for classical integrable systems of the Euler-Arnold and Gaudin types \cite{LOZ14,LOZ16}. In particular, one may describe a family of the Euler-Arnold tops (\ref{q06})
with the inverse inertia tensor
 \beq{q121}
  \begin{array}{c}
  \displaystyle{
  J(S)=\frac{1}{N}\,\tr_2\Big(m_{12}(0){S}_2\Big)\,,
 }
 \end{array}
 \eq
where $\tr_2$ is a trace over the second tensor component, and $m_{12}(z)$ is the coefficient
appearing in the quasi-classical expansion of $R^\hbar_{12}(z)$ near $\hbar=0$:
  \beq{q13}
  \begin{array}{l}
  \displaystyle{
R^\hbar_{12}(z)=\frac{1}{\hbar}\,1_N\otimes 1_N+r_{12}(z)+\hbar\,m_{12}(z)+O(\hbar^2)\,.
 }
 \end{array}
 \eq
Here $r_{12}(z)$ is the classical $r$-matrix, and the Lax matrix is defined as
 \beq{q131}
  \begin{array}{c}
  \displaystyle{
  L(S,z)=\frac{1}{N}\,\tr_2\Big(r_{12}(z){S}_2\Big)\,.
 }
 \end{array}
 \eq
 A wide range of possible applications of AYBE to this type
 construction of integrable systems can be found in \cite{LOZ14,LOZ16,GSZ,SZ}.

\paragraph{1+1 field theories from finite-dimensional mechanics.}
In the above description
 we came to some simple finite-dimensional integrable systems by taking the limit $k=0$
in 2d integrable field theories. The inverse problem is to describe integrable
field theory generalization for a given finite-dimensional integrable model,
where integrability of 1+1 field theory is understood in the sense of the Zakharov-Shabat equation (\ref{q04})
and existence of the classical field $r$-matrix structure (of Maillet type), while the finite-dimensional integrability is in the Liouville theorem sense
through the Lax equation (\ref{q05}) and the classical $r$-matrix structure. In the general case this problem is too complicated. At the same time
certain families of finite-dimensional integrable systems can be extended to the 1+1 integrable field theories
using different approaches. One of them is to use reductions from integrable hierarchies of KP or 2d Toda types,
and treat certain time variable in the hierarchy as the space variable $x$. In this way
the field analogues of the Calogero-Moser and Ruijsenaars-Schneider models were proposed in \cite{Krich2,ZZ}.

Another approach is based on the affine Higgs bundles through extending the Hitchin approach to finite-dimensional integrable systems \cite{LOZ}. This construction provides
 1+1 field analogues of the spin Calogero-Moser model and its multipole generalizations. These models
 are non-ultralocal \cite{Z24}, while the models under consideration in this paper are ultralocal.
 These two type of models are related with each other through 1+1 version of the classical IRF-Vertex correspondence
 \cite{LOZ,Z24,AtZ1}. It is important to mention that the construction based on the affine Higgs bundles
 explains that $U$-matrix (in the ultralocal case) entering the Zakharov-Shabat equation (\ref{q04}) has the same form
 as the Lax matrix $L(z)$ from (\ref{q05}). Therefore, we conclude that
  \beq{q17}
  \begin{array}{l}
  \displaystyle{
U(S,z)=L(S,z)=\frac{1}{N}\,\tr_2\Big(r_{12}(z){S}_2\Big)\,.
 }
 \end{array}
 \eq

Based on the explicit form of $U$-matrix (\ref{q17}) one can deduce higher rank generalization of 1+1 Landau-Lifshitz model
(\ref{q001}) using a certain set of $R$-matrix identities \cite{AtZ2}:
 \beq{q851}
  \begin{array}{c}
  \displaystyle{
  \p_t S+\frac{k^2}{c}\,[S,\p^2_x S]+2[S,s_0 J(S)+k E(\p_x S)]=0\,,
 }
 \end{array}
 \eq
where $c,s_0$ are some constants, and $J(S)$, $E(S)$ are determined through the coefficients
of the expansion (\ref{q13}). Some details are given in the last Section.
That is, the third approach for
constructing 1+1 integrable field theories is to use $R$-matrix identities of AYBE type.

One more approach to  the 2d integrable systems is by  K. Costello and M. Yamazaki \cite{CY}. It is based on the 4d Chern-Simons theory \cite{Vicedo,Lacroix2}. It was demonstrated in
 \cite{LOZ22} that the 4d construction essentially coincides with the affine
Higgs bundles description.

\paragraph{Purpose of the paper.} Nowadays there is a growing interest to the studies of 1+1 integrable
field theories of Gaudin type \cite{Vicedo,LOZ22,Lacroix2}. These are the models, which $U$-matrix has more than one poles in spectral
parameter. An example is given by the principle chiral model (\ref{q02})-(\ref{q03}). Its $U$-matrix has a pair of simple poles (marked points).
In \cite{Chered} I. Cherednik constructed elliptic version of (\ref{q02})-(\ref{q03}) and described trigonometric and rational
degenerations including the 7-vertex deformation of the 6-vertex XXZ model. For an arbitrary number of
marked points the 1+1 elliptic Gaudin model was described in \cite{Z11}.

The problem is to describe possible integrable deformations of these models. In this paper we study two
types of deformations. The first one is given by the so-called twist function. This deformation
 corresponds to multiplication of
$U$-matrix by some function of spectral parameter. We will show that the twist function is naturally
identified with the cocentral charge in the affine Higgs bundles. Another set of deformations comes
from opportunity to use deformed $r$-matrices in the construction of $U$-matrix through (\ref{q17}).
In our description any $r$-matrix can be used, which comes in the quasi-classical limit (\ref{q13})
from a solution of the associative Yang-Baxter equation (\ref{q10}) with some special additional properties.
The multipole $U$-matrix then takes the form
  \beq{q177}
  \begin{array}{l}
  \displaystyle{
U(z)=\frac{1}{N}\sum\limits_{i=1}^n\tr_2\Big(r_{12}(z-z_i){S}^i_2\Big)\,.
 }
 \end{array}
 \eq
 The case of two marked points $n=2$ corresponds to the principle chiral model.

The paper is organized as follows. First, we recall the general construction underlying Hitchin approach to
2d integrable systems \cite{LOZ}. The Higgs field satisfies
the moment map equation, and its solution provides the $U$-matrix of some 2d integrable model. A new result
here is a description of possibility to make the cocentral charge to be dependent
on the spectral parameter.
In the construction of the affine Higgs bundles, we identify the twist function with the cocentral charge.
Next, we proceed to the associative Yang-Baxter equation. We describe a set of identities
underlying Lax equation and the Zakharov-Shabat equation. Then $R$-matrix identities are used to describe
1+1 Gaudin models based on $R$-matrices satisfying AYBE and certain additional properties.
Finally, we discuss deformations coming from introducing the
twist function. It is closely related to the Gaudin model since it adds new poles to $U$-matrix.
In the Appendix we give a very brief review of the Hitchin approach to integrable systems from
the group-theoretical viewpoint.


\paragraph{Acknowledgments.}
 The work of M.A. Olshanetsky was supported by the Russian Science
Foundation grant 24-12-00178. The work of Danil Domanevsky was supported by the Foundation
for the Advancement of Theoretical Physics and Mathematics ''BASIS''.


\section{General construction: Lax operator as connection in the affine Higgs bundle}
\setcounter{equation}{0}

In this paper we discuss a special (although a wide) class of integrable models described
by $U$-matrices, which can be considered as sections of bundles over elliptic curves (or their degenerations)
with simple poles (in spectral parameter $z$) at some set of points $z_1,...,z_n$.
These models can be described in the framework of
the Hitchin approach to integrable systems. Namely, following \cite{LOZ} we define the Lax operator
as a section of the affine Higgs bundle which turns into some $U$-matrix of an integrable model after
performing symplectic reduction with respect to gauge symmetries. That is, the reduced phase space
is the moduli space of the (affine) Higgs bundles. A detailed description of this approach is too long.
For this reason we give a brief review of the construction in the Appendix.
Main result of the symplectic reduction is that an $U$-matrix satisfies the moment map equation.
It can be solved explicitly for some concrete examples of bundles over elliptic curves.

In the finite-dimensional mechanics the Hitchin approach works as follows.
The initial phase space is infinite-dimensional. It is a space of pairs $(\Phi,\bar A)$
including the Higgs field $\Phi$ and the component $\bar A$ of the connection defining the
holomorphic structure, see (\ref{hb1}). This space is
equipped with the canonical symplectic structure (\ref{cf}). By performing the symplectic
reduction with respect to the action of gauge group, we arrive to the moment map equation
  \beq{q30101}
  \begin{array}{c}
    \displaystyle{
\p_{\bz}\Phi+[\bA,\Phi]=\sum_{a=1}^n S^a\de(z-z_a,\bz-\bz_a)\,.
 }
 \end{array}
 \eq
 A gauge fixation condition makes $\bar A$ be either diagonal (an element of the Cartan subalgebra in the Lie algebra
 of the corresponding Lie group) or equal to zero.
 Then solutions of (\ref{q30101}) with respect to $\Phi$ provide Lax matrices of different
 finite-dimensional integrable systems of classical mechanics.
 Put it differently, as a result of the reduction $\Phi$ turns into some Lax matrix, see \cite{LOZ,SmZ}.


\subsection{$U$-matrices and integrability}
Now, let us proceed to the field case. The moment
map equation takes the form\footnote{In the Appendix (\ref{q301})
appears as $\mu_0=0$ in (\ref{me}) with $\mu_0$ (\ref{mom1}).}:
%
  \beq{q301}
  \begin{array}{c}
    \displaystyle{
\p_{\bz}\Phi- {k}\p_x \bA+[\bA,\Phi]=\sum_{a=1}^n S^a\de(z-z_a,\bz-\bz_a)\,.
 }
 \end{array}
 \eq
 Equation (\ref{q301}) is the field generalization of the moment map equation (\ref{q30101})
 Here we study the case $\Sigma$ is an elliptic curve and
  $G=\GLN$, so that $\Phi$, $\bar A$ and the set of  $S^a$, $a=1,...,n$
 become matrices from $\Mat$.
 Notice that using the gauge group transformations one can make ${\bar A}$ be diagonal
  (in the general case ${\bar A}$ is transformed to an element from the Cartan subalgebra of $\gg$).
 Perform a conjugation of both sides with $\exp((z-{\bar z}){\bar A})$.
 Then we get (\ref{q301}) in the ''holomorphic gauge'':
  \beq{q302}
  \begin{array}{c}
    \displaystyle{
\p_{\bz}\ti\Phi- {k}\p_x \bA=\sum_{a=1}^n S^a\de(z-z_a,\bz-\bz_a)\in\Mat\,,
 }
 \end{array}
 \eq
where $\ti\Phi=e^{-(z-{\bar z}){\bar A}}\Phi e^{(z-{\bar z}){\bar A}}$.
 Let ${k}$ be a constant for simplicity. More complicated cases are considered below in the next subsection.
Solution of this equation requires some boundary conditions and a gauge fixation of ${\bar A}$.
The gauge fixation of ${\bar A}$ depends on topological properties of the Higgs bundle. In fact,
the Higgs field $\ti\Phi$ is a section of the affine Higgs bundle ${\rm End}(E^{aff})$, see Appendix.
 The mentioned above topological property is the degree of the underlying bundle $E$.
When ${\rm deg}(E)=0$ the matrix ${\bar A}$ can be transformed to the constant diagonal matrix. The diagonal elements
become dynamical variables. Then the solution of (\ref{q302}) yields the $U$-matrix
of the field analogue of the Calogero-Moser model or its spin and multispin (multipole) generalizations \cite{LOZ,Z24}.

In this paper we study the case corresponding to ${\rm deg}(E)=1$. Then ${\bar A}$ can be gauged to zero,
and (\ref{q302}) takes the form (when ${\bar A}\rightarrow 0$, the Higgs field $\ti\Phi\rightarrow U(z)$ becomes
the $U$-matrix)
  \beq{q303}
  \begin{array}{c}
    \displaystyle{
\p_{\bz}U(z)=\sum_{a=1}^n S^a\de(z-z_a,\bz-\bz_a)\in\Mat\,,
 }
 \end{array}
 \eq
 that is $\ti\Phi$ has $n$ simple poles at $z_1,...,z_n$ with residues
  \beq{q304}
  \begin{array}{c}
    \displaystyle{
\res\limits_{z=z_a}U(z)=S^a\in\Mat\,.
 }
 \end{array}
 \eq
Explicit expression for $U$-matrix in elliptic functions will be given in the
next Section, see (\ref{q482}). Here we emphasize that the equation (\ref{q303}) or (\ref{q304})
is free of derivatives with respect to the loop variable $x$. That is
(\ref{q303}) has {\em the same form as in the finite-dimensional case}.
Therefore, its solution has the same form as in mechanics but the residues $S^a$ are now
considered as functions of $x$.

Integrability of 2d models under consideration is achieved through the classical $r$-matrix
structure. The Poisson brackets are as follows:
  \beq{q305}
  \begin{array}{c}
  \displaystyle{
   \{S^a_{ij}(x),S^b_{kl}(x')\}=\delta^{ab}\Big(-S^a_{il}(x)\delta_{kj}+S^a_{kj}(x)\delta_{il}\Big)\delta(x-x')
 }
 \end{array}
 \eq
for $i,j,k,l=1,...,N$ and $a,b=1,...,n$.
And the classical $r$-matrix structure is similar to the finite-dimensional case:
  \beq{q306}
  \begin{array}{c}
  \displaystyle{
\{U_1(z,x),U_2(w,x')\}=
\Big(
 [U_1(z,x),{\bf r}_{12}(z,w|x)]-[U_2(w,x'),{\bf r}_{21}(w,z|x)]\Big)\delta(x-x')\,.
  }
 \end{array}
 \eq
 Moreover, in our case the classical $r$-matrix is skew-symmetric (${\bf r}_{21}(w,z|x)=-{\bf r}_{12}(z,w|x)$),
 depends on the difference of spectral parameters ${\bf r}_{12}(z,w|x)={\bf r}_{12}(z-w|x)$  and
 it is non-dynamical: ${\bf r}_{12}(z-w|x)=r_{12}(z-w)$.
 This type $r$-matrix satisfies the standard classical Yang-Baxter equation for non-dynamical skew-symmetric $r$-matrices
  \beq{q307}
  \begin{array}{c}
  \displaystyle{
[r_{12}(z_1-z_2),r_{13}(z_1-z_3)]+
[r_{12}(z_1-z_2),r_{23}(z_2-z_3)]+
[r_{13}(z_1-z_3),r_{23}(z_2-z_3)]=0\,.
  }
 \end{array}
 \eq
 %

 Introduce the matrix
  \beq{q308}
  \begin{array}{c}
  \displaystyle{
U(z)=\frac{1}{N}\sum\limits_{a=1}^n\tr_2\Big(r_{12}(z-z_a)S^a_2\Big)\in\Mat\,,
  }
 \end{array}
 \eq
 where $\tr_2$ is the trace over the second tensor component\footnote{We explain this notation in the next Section in detail.}.
 Let the Poisson structure is given by (\ref{q305}) and the $r$-matrix entering (\ref{q308})
 is skew-symmetric and satisfies the classical Yang-Baxter equation (\ref{q307}).
 Then the following statement is easily verified.
 \begin{predl}
 The matrix $U(z)$ (\ref{q308}) satisfies the classical $r$-matrix structure (\ref{q306}).
 \end{predl}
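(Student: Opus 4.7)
The plan is to verify (\ref{q306}) by direct computation, reducing both sides to the same tensor expression through the classical Yang-Baxter equation (\ref{q307}) together with the standard partial-trace identity for the permutation operator. First I would substitute (\ref{q308}) into the left-hand side $\{U_1(z,x), U_2(w,x')\}$; since the $r$-matrices are non-dynamical and the traces involve only the auxiliary spaces, the Poisson bracket passes through and reaches the dynamical factors $S^a$. The bracket (\ref{q305}), written in tensor form on two auxiliary spaces $3,4$, reads $\{S^a_3(x), S^b_4(x')\} = \delta^{ab}[P_{34}, S^a_3]\,\delta(x-x')$ with $P_{34}$ the permutation operator. This produces
\[
\{U_1(z,x), U_2(w,x')\} = \frac{1}{N^2}\sum_{a}\tr_{34}\bigl(r_{13}(z-z_a)\,r_{24}(w-z_a)\,[P_{34},S^a_3]\bigr)\,\delta(x-x').
\]

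The key simplification is to carry out $\tr_4$ first using the elementary identity $\tr_4\bigl(r_{24}(v)\,P_{34}\bigr)=r_{23}(v)$. Because $r_{13}$ and $S^a_3$ act trivially on space $4$ they can be pulled outside $\tr_4$, and because $S^a_3$ commutes with any operator in space $4$ the two terms of $[P_{34},S^a_3]$ yield $r_{13}(u)r_{23}(v)S^a_3$ and $r_{13}(u)S^a_3 r_{23}(v)$ respectively (with $u=z-z_a$, $v=w-z_a$). Cyclicity of the remaining partial trace $\tr_3$ then turns the difference into $\tr_3\bigl([r_{13}(u),r_{23}(v)]\,S^a_3\bigr)$, which is the intended intermediate form.

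For the right-hand side I would use the skew-symmetry of $r$ to collapse (\ref{q306}) into $[U_1(z)+U_2(w), r_{12}(z-w)]\,\delta(x-x')$. Expanding via (\ref{q308}) and noting that $r_{12}(z-w)$ is trivial on the auxiliary space $3$, the commutator enters the trace as $\tr_3\bigl(([r_{13}(z-z_a),r_{12}(z-w)] + [r_{23}(w-z_a),r_{12}(z-w)])\,S^a_3\bigr)$. The CYBE (\ref{q307}) with spectral parameters $(z,w,z_a)$ is precisely the statement that this combination equals $[r_{13}(z-z_a),r_{23}(w-z_a)]$, reproducing the form obtained from the left-hand side. (Normalization constants must of course be matched against the convention implicit in the $R$-matrix under consideration.)

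The main obstacle is the careful bookkeeping in the trace manipulations: both $S^a_3$ and $r_{23}$ act nontrivially on space $3$ and do not commute in general, so the relative ordering of these factors has to be tracked when pulling operators in and out of the partial traces, when moving $P_{34}$ across $S^a_3$, and when invoking cyclicity of $\tr_3$. Once this is handled, the proposition follows from two purely structural inputs, namely the trace identity $\tr_4(r_{24}P_{34})=r_{23}$ and CYBE (\ref{q307}); the skew-symmetry and non-dynamical character of $r$ guarantee that no extra dynamical terms arise on the right-hand side.
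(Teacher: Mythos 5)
Your verification is correct and is exactly the routine computation the paper omits when it says the statement "is easily verified": the tensor form $\{S^a_3,S^b_4\}=\delta^{ab}[P_{34},S^a_3]\,\delta(x-x')$, the partial-trace identity $\tr_4(r_{24}P_{34})=r_{23}$, cyclicity of $\tr_3$, and the CYBE in the form $[r_{13}+r_{23},r_{12}]=[r_{13},r_{23}]$ are precisely the right ingredients, and the skew-symmetry correctly collapses the right-hand side of (\ref{q306}) to $[U_1+U_2,{\bf r}_{12}]$. Your closing caveat about normalization is well taken: with the convention $\res_{z=0}r_{12}(z)=NP_{12}$ of (\ref{q362}) and the $1/N$ prefactors in (\ref{q308}), the $r$-matrix appearing in the bracket structure is $\frac1N r_{12}(z-w)$, a factor the paper's phrase "${\bf r}_{12}(z-w|x)=r_{12}(z-w)$" glosses over.
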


 This Proposition guarantees integrability in the following sense. Existence of the classical $r$-matrix
 structure (\ref{q306}) is a sufficient condition for the Poisson commutativity
 $\{\tr(T^k(z,2\pi)),\tr(T^m(w,2\pi))\}=0$ of the traces of powers of the monodromy matrices
 \beq{q309}
 \begin{array}{c}
  \displaystyle{
 T(z,x)={\rm Pexp}\Big( \frac{1}{k}\int\limits_0^x dx'\,  U(z,x') \Big)\,.
  }
 \end{array}
\eq

\subsection{Cocentral charge as twist function}

In the above analysis of $U$-matrices we assumed $k$ be a constant. At the same time it satisfies
the equation 2 from (\ref{me})
 \beq{q310}
 \begin{array}{c}
  \displaystyle{
 \p_{\bar z}k(z)=\sum\limits_{b=1}^m s_b\delta(z-w_b,{\bar z}-{\bar w}_b)\,,
  }
 \end{array}
\eq
 i.e.
 \beq{q311}
 \begin{array}{c}
  \displaystyle{
 \res\limits_{z=w_b}k(z)=s_b\,,\quad b=1,...,m,
  }
 \end{array}
\eq
 so that constant $k$ is the simplest case corresponding to $m=0$ or $s_b=0$ for all $b$.

It is important to notice that the positions of simple poles $w_b$ may do not coincide with
the positions of marked points $z_a$ in (\ref{q304}).

Consider two examples.

\emph{1. $\Si$ is a rational curve.} By representing $k(z)$ in the form
 \beq{q312}
 \begin{array}{c}
  \displaystyle{
 {k}(z)=\prod_{b=1}^m\frac{z-y_b}{z-w_b},
  }
 \end{array}
\eq
we then obtain (\ref{q312}) as a solution of (\ref{q310}) with
\beq{nr}
s_b=\frac{\prod\limits_{c=1}^m(w_b-y_c)}{\prod\limits_{c:c\neq b}^m(w_b-w_c)}\,.
\eq
This solution satisfies (\ref{v3}).


\emph{2. $\Si$ is an elliptic curve.}
In this case solutions of the moment equation (\ref{v2}) should be double-periodic.
We take
\beq{nr1}
{k}(z)=\prod_{b=1}^m\frac{\vth(z-y_b)}{\vth(z-w_b)}\,.
\eq
For the double-periodicity positions of poles and zeros should satisfy the condition
$$
\sum_{b=1}^my_b=\sum_{b=1}^mw_b\,.
$$
In this case we have solution (\ref{nr1}) of (\ref{q310}) with
$$
s_b=\frac{\prod\limits_{c=1}^m\vth(w_b-y_c)}{\vth'(0)\prod\limits_{c:c\neq b}^m\vth(w_b-w_c)}\,.
$$

The transition from the
constant $k$ to $k(z)$ means transition from the connection $k\p_x-U(z)$ to $k(z)\p_x-U(z)$.
From the point of view of the Zakharov-Shabat equation (\ref{q04})
 it is equivalent to the transition from $U(z)$ in
  \beq{q313}
  \begin{array}{l}
  \displaystyle{
 \p_t U(z)-k(z)\p_x V(z)=[U(z),V(z)]\,,
 }
 \end{array}
 \eq
 to ${\ti U}(z)$:
 \beq{q314}
 \begin{array}{c}
  \displaystyle{
 U(z) \rightarrow {\ti U}(z)=\frac{k}{k(z)}\,U(z)\,.
  }
 \end{array}
\eq
 in
  \beq{q315}
  \begin{array}{l}
  \displaystyle{
 \p_t {\ti U}(z)-k\p_x V(z)=[{\ti U}(z),V(z)]\,.
 }
 \end{array}
 \eq
Then from (\ref{q306}) we conclude that
  \beq{q316}
  \begin{array}{c}
  \displaystyle{
\{{\ti U}_1(z,x),{\ti U}_2(w,x')\}=
\Big(
 [{\ti U}_1(z,x),{\ti r}_{12}(z,w)]-[{\ti U}_2(w,x'),{\ti r}_{21}(w,z)]\Big)\delta(x-x')\,,
  }
 \end{array}
 \eq
 where
  \beq{q317}
  \begin{array}{c}
  \displaystyle{
{\ti r}_{12}(z,w)=\frac{k(z)}{k}\,r_{12}(z-w)\,.
  }
 \end{array}
 \eq
 Such redefinition is widely known in recent studies of 2d field theories. The function
 $k(z)/k$ is called the twist function (or, more precisely, its inverse $k/k(z)$).
 See \cite{Lacroix2,Vicedo} and references therein.

 Finally, we conclude that non-trivial dependence $k(z)$ provides additional poles in $U$-matrix through
 (\ref{q314}). We will consider this possibility in the end of the paper.


\section{Associative Yang-Baxter equation and finite-dimensional models}
\setcounter{equation}{0}

\subsection{$R$-matrices}



In this paper we consider the associative Yang-Baxter equation
  \beq{q31}
  \begin{array}{c}
    \displaystyle{
  R^{\hbar}_{12}(z_1-z_2) R^{\eta}_{23}(z_2-z_3) =
   R^{\eta}_{13}(z_1-z_3) R^{\hbar-\eta}_{12}(z_1-z_2) +
    R^{\eta-\hbar}_{23}(z_2-z_3) R^{\hbar}_{13}(z_1-z_3)
 }
 \end{array}
 \eq
written for $R$-matrices
in the fundamental representation of ${\GLN}$ Lie group. That is $R_{12}^\hbar(z)\in\Mat^{\otimes 2}$
 is a matrix valued function. The indices $ab$ in $R_{ab}^\hbar(z)$ mean the numbers of tensor components, where
 it acts non-trivially. For example,
  \beq{q32}
  \begin{array}{c}
    \displaystyle{
 R^\hbar_{13}(z)=\sum\limits_{ijkl=1}^N R_{ij,kl}(\hbar,z)E_{ij}\otimes 1_N\otimes E_{kl}\,.
 }
 \end{array}
 \eq
 Change of order of these indices means the permutation of the tensor components, i.e. compared to (\ref{q09})
 the expression for $R_{21}^\hbar(z)$ has the form
  \beq{q33}
  \begin{array}{l}
  \displaystyle{
  R^\hbar_{21}(z)=\sum\limits_{ijkl=1}^N R_{ij,kl}(\hbar,z)E_{kl}\otimes E_{ij}\,.
   }
 \end{array}
 \eq
Besides AYBE (\ref{q31}) $R$-matrices under consideration are assumed to satisfy two more properties.
The first one is skew-symmetry
  \beq{q34}
    \displaystyle{
  R^\hbar_{12}(z)=-R^{-\hbar}_{21}(-z)\,,
  }
  \eq
and the second is unitarity
\beq{q35}
\begin{array}{c}
    R^{\hbar}_{12} (z) R^{\hbar}_{21} (-z)
    =F^\hbar(z)\, 1_N \otimes 1_N\,,
\end{array}
\eq
where $F^\hbar(z)$ is a function depending on a choice of normalization.
For instance, it can be equal to $N^2\phi(N\hbar,z)\phi(N\hbar,-z)$ or $\phi(\hbar,z)\phi(\hbar,-z)$
in different cases.
The function $\phi$ entering
(\ref{q35}) through $F^\hbar(z)$ was defined in (\ref{q12}). That expression is valid for elliptic $R$-matrix, while
for trigonometric and rational $R$-matrices one should replace it with the corresponding degenerations of
(\ref{q12}):
  \beq{q36}
  \begin{array}{l}
  \displaystyle{
 \phi(\hbar,z)=\frac{\vth'(0)\vth(\hbar+z)}{\vth(\hbar)\vth(z)}
 \quad\stackrel{\rm trig.}{\longrightarrow}\quad
 \pi\frac{\sin(\pi(\hbar+z))}{\sin(\pi\hbar)\sin(\pi z)}
 \quad\stackrel{\rm rat.}{\longrightarrow}\quad
 \frac{\hbar + z}{\hbar z}\,.
 }
 \end{array}
 \eq
We also imply certain local behaviour of $R$-matrices near $z=0$ and $\hbar=0$.
The quasi-classical
limit (\ref{q13}) means that
  \beq{q361}
  \begin{array}{l}
  \displaystyle{
\res\limits_{\hbar=0}R^\hbar_{12}(z)=1_N\otimes 1_N\,.
 }
 \end{array}
 \eq
The local behaviour near $z=0$ is as follows:
  \beq{q362}
  \begin{array}{l}
  \displaystyle{
\res\limits_{z=0}R^\hbar_{12}(z)=\res\limits_{z=0}r_{12}(z)=NP_{12}=N\sum\limits_{ijkl=1}^NE_{ij}\otimes E_{kl}\,,
 }
 \end{array}
 \eq
where $P_{12}$ is the matrix permutation operator (it permutes a pair of vectors in tensor product).
In particular, it means that we have the following expansion near $z=0$:
%
  \beq{q363}
  \begin{array}{l}
  \displaystyle{
r_{12}(z) = \frac{1}{z}\, NP_{12} + r^{(0)}_{12} +  O(z)\,.
 }
 \end{array}
 \eq
From the skew-symmetry (\ref{q34}) one can easily find (skew)symmetric properties of
the coefficients of expansions (\ref{q13}) and (\ref{q363}):
  \beq{q364}
  \begin{array}{l}
  \displaystyle{
r_{12}(z) = -r_{21}(-z)\,,\quad
m_{12}(z) = m_{21}(-z)\,,\quad
r^{(0)}_{12}=-r^{(0)}_{21}\,,\quad
m_{12}(0) = m_{21}(0)
\,.
 }
 \end{array}
 \eq

It can be shown (see e.g. \cite{LOZ14}) that any solution of (\ref{q31}) obeying the properties
(\ref{q34}) and (\ref{q35}) satisfies also the quantum Yang-Baxter equation (\ref{q08}).
This is why we indeed deal with quantum $R$-matrices. More precisely, we deal with a subset
of solutions of the quantum Yang-Baxter equation (\ref{q08}), which also satisfy the AYBE (\ref{q31}).
Let us briefly describe families of $R$-matrices which satisfy all required properties.
Before considering non-trivial examples let us write down the most simple one, which
is the (properly normalized) rational Yang's $R$-matrix:
 \beq{q365}
 \begin{array}{c}
  \displaystyle{
R_{12}^{{\rm Yang},\hbar}(z)=\frac{1_N\otimes 1_N}{\hbar}+\frac{P_{12}}{z}\,.
  }
 \end{array}
\eq
For $N=1$ it coincides with the rational function $\phi(\hbar,z)$ (\ref{q36}).

\paragraph{Elliptic $R$-matrix.}
For definition of the Baxter-Belavin elliptic $R$-matrix \cite{BB}
the special matrix basis in $\Mat$ should be used:
  \beq{q37}
  \begin{array}{l}
  \displaystyle{
T_a=T_{a_1 a_2}=\exp\left(\frac{\pi\imath}{{ N}}\,a_1
 a_2\right)Q_1^{a_1}Q_2^{a_2}\,,
 \qquad
 a=(a_1,a_2)\in\mZ_{ N}\times\mZ_{ N}
 }
 \\ \ \\
  \displaystyle{
(Q_1)_{kl}=\delta_{kl}\exp(\frac{2\pi
 \imath}{{ N}}k)\,,
 \qquad
 (Q_2)_{kl}=\delta_{k-l+1=0\,{\hbox{\tiny{mod}}}\,
 { N}}\,,\quad k,l=1,..,N\,.
 }
 \end{array}
 \eq
In particular, $T_{(0,0)}=1_N$. The basis has the property $\tr(T_\al T_\be)=N\delta_{\al+\be,(0,0)}$.
 See details in \cite{BB} (and also Appendix in \cite{ZZ}). Then quantum elliptic $R$-matrix is as follows:
 \beq{q38}
 \begin{array}{c}
  \displaystyle{
 R_{12}^\hbar(z)=\sum\limits_{a\in\,\mZ_{ N}\times\mZ_{ N}} T_a\otimes T_{-a}
 \exp (2\pi\imath\,\frac{a_2z}{N})\,\phi(z,\frac{a_1+a_2\tau}{N}+\hbar)\in{\rm Mat}(N,\mC)^{\otimes 2}\,.
 }
 \end{array}
 \eq
Its classical limit yields
 \beq{q39}
 \begin{array}{c}
  \displaystyle{
 r_{12}(z)=E_1(z)1_N\otimes 1_N+\sum\limits_{a\neq(0,0)} T_a\otimes T_{-a}
 \exp (2\pi\imath\,\frac{a_2z}{N})\,\phi(z,\frac{a_1+a_2\tau}{N})
 }
 \end{array}
 \eq
with $E_1(z)=\p_z\log\vth(z)$ and
 \beq{q40}
 \begin{array}{c}
  \displaystyle{
 m_{12}(z)=\rho(z)1_N\otimes 1_N+\sum\limits_{a\neq(0,0)} T_a\otimes T_{-a}
 \exp (2\pi\imath\,\frac{a_2z}{N})\,f(z,\frac{a_1+a_2\tau}{N})\,,
 }
 \end{array}
 \eq
 where $f(z,u)=\p_w\phi(z,w)|_{w=u}$ and $\rho(z)=(E_1^2(z)-\wp(z))/2$. Then one finds
 \beq{q41}
 \begin{array}{c}
  \displaystyle{
 r_{12}^{(0)}=\!\sum\limits_{a\neq(0,0)}\! T_a\otimes T_{-a}\Big(2\pi\imath\,\frac{a_2}{N}+E_1(\frac{a_1\!+\!a_2\tau}{N}) \Big)\,,
 }
 \end{array}
 \eq
 \beq{q42}
 \begin{array}{c}
  \displaystyle{
 m_{12}(0)=\frac{\vth'''(0)}{3\vth'(0)}1_N\otimes 1_N-\sum\limits_{a\neq(0,0)} T_a\otimes T_{-a}
 E_2(\frac{a_1+a_2\tau}{N})
 }
 \end{array}
 \eq
 with $E_2(x)=-E_1'(z)=-\p_z^2\log\vth(z)$. With these results $J(S)$ from (\ref{q121}) takes the form:
 \beq{q43}
 \begin{array}{c}
  \displaystyle{
 J(S)=\frac{\vth'''(0)}{3\vth'(0)}1_N S_{0,0}-\sum\limits_{a\neq(0,0)} T_a S_a
 E_2(\frac{a_1+a_2\tau}{N})=\frac{\vth'''(0)}{3\vth'(0)}\,S-\sum\limits_{a\neq(0,0)} T_a S_a
 \wp(\frac{a_1+a_2\tau}{N})\,.
 }
 \end{array}
 \eq
 where we used relation the $E_2(z)=\wp(z)-\vth'''(0)/(3\vth'(0))$, $\wp(z)$ -- is the Weierstrass $\wp$-function.

 When $N=2$ case we obtain the 8-vertex Baxter's $R$-matrix. In this case
 \beq{q44}
 \begin{array}{c}
  \displaystyle{
 Q_1=\mats{-1}{0}{0}{1}\,,\quad Q_2=\mats{0}{1}{1}{0}
  }
 \end{array}
 \eq
and the basis matrices are $T_{00}=1_2=\sigma_0$, $T_{10}=-\sigma_3$, $T_{01}=\sigma_1$ and
$T_{11}=\sigma_2$, where $\{\sigma_a\}$, $a=0,1,2,3,4$ are the Pauli matrices:
 \beq{q45}
 \begin{array}{c}
  \displaystyle{
 \sigma_0=\mats{1}{0}{0}{1}\,,\quad
 \sigma_1=\mats{0}{1}{1}{0}\,,\quad
 \sigma_2=\mats{0}{-\imath}{\imath}{0}\,,\quad
 \sigma_3=\mats{1}{0}{0}{-1}\,.
  }
 \end{array}
 \eq
The $R$-matrix (\ref{q38}) for $N=2$ has the form
 \beq{q46}
 \begin{array}{c}
  \displaystyle{
 R_{12}^\hbar(z)
 =\vf_{00}\,\sigma_0\otimes\sigma_0
 +\vf_{01}\,\sigma_1\otimes\sigma_1
  +\vf_{11}\,\sigma_2\otimes\sigma_2
  +\vf_{10}\,\sigma_3\otimes\sigma_3\,,
  }
 \end{array}
\eq
 \beq{q47}
 \begin{array}{c}
  \displaystyle{
 \vf_{00}=\phi(z,\hbar)\,,\quad
 \vf_{10}=\phi(z,\hbar)\,,\quad
 \vf_{01}=e^{\pi\imath z}\phi(z,\hbar)\,,\quad
 \vf_{11}=e^{\pi\imath z}\phi(z,\hbar)\,.
 }
  \end{array}
 \eq
 In the form of $4\times 4$ matrix it is as follows:
 \beq{q48}
 \begin{array}{c}
 R_{12}^\hbar(z)=\left(
 \begin{array}{cccc}
 \vf_{00}+\vf_{10} & 0 & 0 & \vf_{01}-\vf_{11}
 \\
 0 & \vf_{00}-\vf_{10} & \vf_{01}+\vf_{11} & 0
  \\
 0 & \vf_{01}+\vf_{11} & \vf_{00}-\vf_{10} & 0
 \\
 \vf_{01}-\vf_{11} & 0 & 0 & \vf_{00}+\vf_{10}
 \end{array}
 \right)\,.
  \end{array}
 \eq
\paragraph{Elliptic $U$-matrix.} The equation (\ref{q303}) supplied with
the boundary conditions (quasi-periodic behaviour on the lattice of periods of elliptic curve)
 \beq{q481}
 \begin{array}{c}
  \displaystyle{
 U(z+1)=Q_1^{-1}U(z)Q_1\,,\qquad
 U(z+\tau)=Q_2^{-1}U(z)Q_2
 }
 \end{array}
 \eq
has the following solution:
 \beq{q482}
 \begin{array}{c}
  \displaystyle{
 U(z)=1_N\sum\limits_{j=1}^n S^j_{(0,0)}E_1(z-z_i)+\sum\limits_{j=1}^n\sum\limits_{a\neq(0,0)} S^j_a T_a
 \exp (2\pi\imath\,\frac{a_2(z-z_j)}{N})\,\phi(z-z_j,\frac{a_1+a_2\tau}{N})\,,
 }
 \end{array}
 \eq
where $\sum\limits_{j=1}^n S^j_{(0,0)}=0$. The first term is proportional to the identity matrix. It is not necessary and can be removed. Expression (\ref{q482}) coincides with (\ref{q308}), where $r$-matrix is (\ref{q39}).
On the other hand it coincides with the expression for the Lax matrix of elliptic Gaudin model \cite{STS}
or elliptic Schlesinger system \cite{CLOZ}.

\paragraph{Trigonometric $R$-matrices.} Let us begin with $N=2$ case.
In \cite{Chered} the following 7-vertex $R$-matrix was found:
  \beq{q49}
   \begin{array}{c}
  R^\hbar(z)=\left(\begin{array}{cccc} \coth(z)+\coth(\hbar) & 0 & 0 & 0\vphantom{\Big|}
  \\ 0 & \sinh^{-1}(\hbar) & \sinh^{-1}(z) & 0\vphantom{\Big|}
  \\ 0 & \sinh^{-1}(z) & \sinh^{-1}(\hbar) & 0\vphantom{\Big|}
  \\ -4\,e^{-2\Lambda}\sinh(z+\hbar) & 0 & 0 & \coth(z)+\coth(\hbar)          \vphantom{\Big|} \end{array} \right)\,,
  \end{array}
  \eq
 where $\Lambda$ is an arbitrary constant. In the limit $\Lambda\to +\infty$
 we come to the standard 6-vertex XXZ $R$-matrix.
  Classification of trigonometric solutions of AYBE (for $\GLN$ case) was suggested in
  \cite{Pol2}, see also brief review in \cite{LOZ16}.

\paragraph{Rational $R$-matrices.} Consider $N=2$ case. The simplest example
is the Yang's $R$-matrix (\ref{q365}), which in $N=2$ case becomes
the XXX 6-vertex $R$-matrix
 \beq{q50}
 \begin{array}{c}
  \displaystyle{
 R_{12}^{\hbar}(z)=
  \left(
  \begin{array}{cccc}
  1/\hbar+1/z & 0 & 0 & 0
  \\
  0 & 1/\hbar & 1/z & 0
  \\
   0 & 1/z & 1/\hbar & 0
   \\
   0 & 0 & 0 & 1/\hbar+1/z
   \end{array}
   \right)\,.
  }
 \end{array}
\eq
It has the following
 11-vertex deformation \cite{Chered}:
 \beq{q51}
 \begin{array}{c}
  \displaystyle{
 R_{12}^{{\rm 11v},\hbar}(z)=
  \left(
  \begin{array}{cccc}
  1/\hbar+1/z & 0 & 0 & 0
  \\
  -z-\hbar & 1/\hbar & 1/z & 0
  \\
   -z-\hbar & 1/z & 1/\hbar & 0
   \\
   -z^3-\hbar^3-2z^2\hbar-2z\hbar^2 & z+\hbar & z+\hbar & 1/\hbar+1/z
   \end{array}
   \right)\,.
  }
 \end{array}
\eq
This is a deformation of the Yang's $R$-matrix in the following sense:
 \beq{q52}
 \begin{array}{c}
  \displaystyle{
 \lim\limits_{\epsilon\rightarrow 0}\epsilon R_{12}^{{\rm 11v},\hbar\epsilon}(z\epsilon)=R_{12}^{{\rm Yang},\hbar}(z)\,.
  }
 \end{array}
\eq
Higher rank analogues of the 11-vertex $R$-matrix satisfying AYBE can be found in \cite{AtZ3}.

\subsection{$R$-matrix identities}
It was mentioned in the Introduction that the  AYBE (\ref{q31}) is a matrix generalization
of the addition formula (\ref{q11}) for the elliptic Kronecker function (\ref{q12}).
Therefore, using AYBE, the properties (\ref{q34})-(\ref{q35}) and (\ref{q361})-(\ref{q362})
and the expansions (\ref{q13}), (\ref{q363}) one can derive a wide set of identities likewise
it works in the scalar case for elliptic functions. For example,
 \beq{q53}
 \begin{array}{c}
  \displaystyle{
 \Big(r_{12}(z_1-z_2)+r_{23}(z_2-z_3)+r_{31}(z_3-z_1)\Big)^2=
  }
  \\ \ \\
    \displaystyle{
 =1_{N^3}N^2\Big( \wp(z_1-z_2)+\wp(z_2-z_3)+\wp(z_3-z_1) \Big)
  }
 \end{array}
\eq
is a matrix analogue of the scalar identity
 \beq{q54}
 \begin{array}{c}
  \displaystyle{
 \Big(E_1(z_1-z_2)+E_1(z_2-z_3)+E_1(z_3-z_1)\Big)^2=
 \Big( \wp(z_1-z_2)+\wp(z_2-z_3)+\wp(z_3-z_1) \Big)\,.
  }
 \end{array}
\eq
While $R_{12}^\hbar(z)$ is a matrix analogue of the function $\phi(\hbar,z)$,
the classical $r$-matrix is a matrix analogue of $E_1(z)=\p_z\log\vth(z)$.
Different non-trivial examples of the $R$-matrix identities can be found in
\cite{LOZ14,LOZ15,LOZ16,Z16}.

Here we list some identities required for the Lax equation and the Zakharov-Shabat equation.
The first one identity is
\beq{q55}
\begin{split}
    & \left[m_{13}\left(z_1-z_3\right), r_{12}\left(z_1-z_2\right)\right] = \left[r_{12}\left(z_1-z_2\right), m_{23}\left(z_2-z_3\right)\right] +\\ & + \left[m_{12}\left(z_1-z_2\right), r_{23}\left(z_2-z_3\right)\right] + \left[m_{13}\left(z_1-z_3\right), r_{23}\left(z_2-z_3\right)\right]\,.
    \end{split}
\eq
In particular,
\beq{q56}
\begin{split}
    & \left[m_{13}\left(0\right), r_{12}\left(z_a-z_b\right)\right] = \left[r_{12}\left(z_a-z_b\right), m_{23}\left(z_b-z_a\right)\right] +\\ & + \left[m_{12}\left(z_a-z_b\right), r_{23}\left(z_b-z_a\right)\right] + \left[m_{13}\left(0\right), r_{23}\left(z_b-z_a\right)\right]\,.
    \end{split}
\eq
In another limiting case (when $z_2\rightarrow z_1$) (\ref{q55}) yields
 \beq{q57}
 \begin{array}{c}
  \displaystyle{
 [m_{13}(z),r_{12}(z)]=[r_{12}(z),m_{23}(0)]-[\p_z m_{12}(z),NP_{23}]
  +[m_{12}(z),r_{23}^{(0)}]+[m_{13}(z),r_{23}^{(0)}]\,.
 }
 \end{array}
 \eq

Another identity is
 \beq{q58}
  \begin{array}{c}
  \displaystyle{
  r_{12}(z)r_{13}(z\!+\!w)-r_{23}(w)r_{12}(z)+r_{13}(z\!+\!w)r_{23}(w)=m_{12}(z)+m_{23}(w)+m_{13}(z\!+\!w)\,.
 }
 \end{array}
 \eq
In the limit $w\rightarrow 0$ it gives
 \beq{q59}
  \begin{array}{c}
  \displaystyle{
  r_{12}(z)r_{13}(z)=r_{23}^{(0)}r_{12}(z)-r_{13}(z)r_{23}^{(0)}
  -N\p_z r_{13}(z)P_{23}+m_{12}(z)+m_{23}(0)+m_{13}(z)\,.
 }
 \end{array}
 \eq

\subsection{Euler-Arnold tops, Gaudin models and Painlev\'e-Schlesinger systems}

\paragraph{Integrable Euler-Arnold tops.} For $S\in\Mat$ being a matrix
of dynamical variables define
the Lax pair as
 \beq{q14}
  \begin{array}{c}
  \displaystyle{
  L(S,z)=\frac{1}{N}\,\tr_2\Big(r_{12}(z){S}_2\Big)\,,\qquad M(S,z)=\frac{1}{N}\,\tr_2\Big(m_{12}(z){S}_2\Big)\,,\qquad
  {S}_2=1_N\otimes S\,,
 }
 \end{array}
 \eq
where $\tr_2$ is the trace over the second tensor component, i.e. for
 \beq{q141}
  \begin{array}{c}
  \displaystyle{
  r_{12}(z)=\sum\limits_{ijkl=1}^N r_{ij,kl}(z)E_{ij}\otimes E_{kl}\,,
   }
 \end{array}
 \eq
we have
 \beq{q142}
  \begin{array}{c}
  \displaystyle{
  L(S,z)=\frac{1}{N}\,\tr_2\Big(r_{12}(z){S}_2\Big)=\frac{1}{N}\sum\limits_{ijkl=1}^N r_{ij,kl}(z)E_{ij}S_{lk}\,.
 }
 \end{array}
 \eq
Then the Lax equation (\ref{q05}) holds true identically in spectral parameter $z$ on the
equations of motion (\ref{q06}) with
 \beq{q15}
  \begin{array}{c}
  \displaystyle{
  J(S)=\frac{1}{N}\,\tr_2\Big(m_{12}(0){S}_2\Big)=M(S,0)\,.
 }
 \end{array}
 \eq
The proof of the above statement is based on the usage of (\ref{q57}). One should multiply both
sides of this identity by $S_2S_3$ and compute the trace
 $\tr_{2,3}$ over the second and the third tensor components. This leads to
$[L(S,z),M(S,z)]=L([S,J(S)],z)$, see \cite{LOZ16}.

\paragraph{Heat equation and monodromy preserving equations.} Suppose $R$-matrix $R_{12}^\hbar(z)$, satisfying
the AYBE and the required additional properties, depends also on a parameter $\tau$ in way that
the following heat equation holds true:
 \beq{q60}
  \begin{array}{c}
  \displaystyle{
  2\pi\imath \p_\tau R_{12}^\hbar(z)=\p_\hbar\p_zR_{12}^\hbar(z)\,.
 }
 \end{array}
 \eq
For example, this
 equation holds true for the elliptic Baxter-Belavin $R$-matrix written as given in (\ref{q38}).

For $N=1$ we have
 \beq{q61}
  \begin{array}{c}
  \displaystyle{
  2\pi\imath \p_\tau \phi(\hbar,z)=\p_\hbar\p_z \phi(\hbar,z)\,,
 }
 \end{array}
 \eq
which follows from the heat equation $4\pi\imath\p_\tau\vth(z)=\p_z^2\vth(z)$\,.

Plugging the quasi-classical expansion (\ref{q13}) into (\ref{q60}) we obtain
 \beq{q62}
  \begin{array}{c}
  \displaystyle{
  2\pi\imath \p_\tau r_{12}(z)=\p_z m_{12}(z)\,.
 }
 \end{array}
 \eq
Then the described above construction of integrable Euler-Arnold top can be naturally
extended to the non-autonomous version. Namely, the
non-autonomous\footnote{It is non-autonomous since $J(S)$ explicitly depend on
the time variable $\tau$, see (\ref{q43}).} Euler-Arnold top
 \beq{q63}
  \begin{array}{c}
  \displaystyle{
  2\pi\imath \p_\tau S=[S,J(S)]\,.
 }
 \end{array}
 \eq
with $J(S)$ (\ref{q121}) is equivalent to the
monodromy preserving equation
 \beq{q64}
  \begin{array}{c}
  \displaystyle{
  2\pi\imath \p_\tau L(S,z)-\p_z M(S,z)=[L(S,z),M(S,z)]
 }
 \end{array}
 \eq
with the same matrices $L(S,z)$, $M(S,z)$ (\ref{q14}) as in the Lax equation.

\paragraph{Gaudin models.} The Gaudin model is defined by the Lax matrix
 \beq{q65}
  \begin{array}{c}
  \displaystyle{
  L^{\hbox{\tiny{G}}}(z)=\frac{1}{N}\sum\limits_{a=1}^n\tr_2\Big(r_{12}(z-z_a){S}^a_2\Big)\,.
 }
 \end{array}
 \eq
The classical Gaudin Hamiltonians are the following functions:
 \beq{q66}
  \begin{array}{c}
  \displaystyle{
 H_a^{\hbox{\tiny{G}}} = \frac{1}{N}\sum\limits_{b:b\neq a}^n
 \tr_{12}\left(r_{12}(z_a-z_b)S^a_1S^{b}_2\right),\quad a=1,...,n\,.
 }
 \end{array}
 \eq
and
 \beq{q67}
  \begin{array}{c}
  \displaystyle{
 H_0^{\hbox{\tiny{G}}} = \frac{1}{2N}\sum\limits_{a,b=1}^n
 \tr_{12}\left(m_{12}(z_a-z_b)S^a_1S^{b}_2\right)
 =
 }
 \\ \ \\
  \displaystyle{
 =\frac{1}{2}\sum\limits_{a=1}^n\tr(S^aJ(S^a))+
 \frac{1}{2N}\sum\limits_{a\neq b}^n
 \tr_{12}\left(m_{12}(z_a-z_b)S^a_1S^{b}_2\right),
 }
 \end{array}
 \eq
where $J(S^a)$ was defined in (\ref{q15}).
The Hamiltonian $H_a^{\hbox{\tiny{G}}}$ provides dynamics
(the Hamiltonian flows) with time variable $t_a$, and the equations of motion are of the form:
 \beq{q68}
  \begin{array}{c}
  \displaystyle{
\frac{d}{dt_a}S^b=[S^b,I^{ba}(S^a)]\,,\quad b\neq a
 }
 \\ \ \\
   \displaystyle{
\frac{d}{dt_a}S^a=\sum\limits_{b:b\neq a}^n [I^{ab}(S^b),S^a]\,,
 }
 \end{array}
 \eq
where we introduced notation
 \beq{q69}
  \begin{array}{c}
  \displaystyle{
I^{ab}\left(S^{b}\right) = \frac{1}{N}\,\tr_{2}\left(r_{12}(z_a-z_b)S^{b}_2\right)\,.
 }
 \end{array}
 \eq
The equation of motion generated by the Hamiltonian $H_0^{\hbox{\tiny{G}}}$ (\ref{q67})
is as follows:
 \beq{q70}
  \begin{array}{c}
  \displaystyle{
\frac{d}{dt_0}S^a=[S^a,J(S^a)]+\sum\limits_{b:b\neq a}^n[S^a,J^{ab}(S^b)]\,,
 }
 \end{array}
 \eq
where
 \beq{q71}
  \begin{array}{c}
  \displaystyle{
J^{ab}\left(S^{b}\right) = \frac{1}{N}\,\tr_{2}\left(m_{12}(z_a-z_b)S^{b}_2\right)\,.
 }
 \end{array}
 \eq
Notice that $J^{aa}(S^a)=J(S^a)$.

\begin{predl}\label{prop31}
Equations of motion (\ref{q68}) and (\ref{q70}) are represented in the Lax form
 \beq{q72}
  \begin{array}{c}
  \displaystyle{
 \frac{d}{dt_a}L^{\hbox{\tiny{G}}}(z)=[L^{\hbox{\tiny{G}}}(z),M^a(z)]\,,\quad a=0,1,..,n
 }
 \end{array}
 \eq
with
 \beq{q73}
  \begin{array}{c}
  \displaystyle{
  M^a(z)=-\frac{1}{N}\tr_2\Big(r_{12}(z-z_a){S}^a_2\Big)=-L(S^a,z-z_a)\,.
 }
 \end{array}
 \eq
and
 \beq{q74}
  \begin{array}{c}
  \displaystyle{
  M^0(z)=\frac{1}{N}\sum\limits_{a=1}^n\tr_2\Big(m_{12}(z-z_a){S}^a_2\Big)\,.
 }
 \end{array}
 \eq
\end{predl}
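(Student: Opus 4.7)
My plan is to verify (\ref{q72}) separately for each of the $n+1$ flows, using the same template in all cases: express both sides as traces over two auxiliary tensor slots against the bilinear $S^b_{2}\, S^{a'}_{3}$, then reduce the commutator of $R$-matrix factors via one of the identities from Section 3.2. The key feature is that the $L$-operator (\ref{q65}) is linear in each $S^a$ via a single-site trace, so the whole computation reduces to an identity of $R$-matrices in three auxiliary spaces.

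For the flows $t_a$ with $a\geq 1$, the right-hand side is
\begin{equation*}
[L^{\hbox{\tiny{G}}}(z), M^a(z)] = -\sum_{b=1}^{n}\frac{1}{N^{2}}\tr_{23}\!\bigl([r_{12}(z-z_b),\, r_{13}(z-z_a)]\, S^b_{2}\, S^a_{3}\bigr),
\end{equation*}
and the $b=a$ term vanishes since $[L(S^a,z-z_a),L(S^a,z-z_a)]=0$. For each $b\neq a$ I apply the classical Yang-Baxter equation (\ref{q307}) with spectral triple $(z, z_b, z_a)$ to rewrite $[r_{12},r_{13}]$ as $-[r_{12}+r_{13},\, r_{23}(z_b-z_a)]$; tracing cyclically over slot~3 the piece involving $r_{23}(z_b-z_a)\, S^a_{3}$ produces a factor of $I^{ab}(S^b)$ in slot~1 by the definition (\ref{q69}), and symmetrically tracing over slot~2 produces $I^{ba}(S^a)$. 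Substituting the equations of motion (\ref{q68}) into the left-hand side $\dot L^{\hbox{\tiny{G}}}(z)$ yields the same trace expressions term by term in $b$, completing the identification for $a\geq 1$.

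For $a=0$ the argument is parallel, with $m_{13}$ replacing $r_{13}$ in the second factor. The RHS double sum splits into off-diagonal ($a'\neq b$) and diagonal ($a'=b$) contributions. For the off-diagonal case I invoke identity (\ref{q55}) at the spectral triple $(z, z_b, z_{a'})$; the three commutators on its right-hand side, after contraction against $S^b_{2}\, S^{a'}_{3}$ and cyclic rearrangement, match term by term the cross contributions in (\ref{q70}) involving $I^{ba'}(S^{a'})$ and $J^{ba'}(S^{a'})$ (recall $J^{ba'}$ is defined by $m_{12}$ as in (\ref{q71})). For the diagonal case I use the confluent identity (\ref{q57}) at argument $z-z_b$; its leading term, contracted against $S^b_{2}\, S^b_{3}$, produces $[S^b, J(S^b)]$ transferred to slot~1 by $r_{12}(z-z_b)$, reproducing the self-interaction in (\ref{q70}).

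The main technical obstacle is controlling the residual pieces in the diagonal step of the $t_0$ computation. Besides the principal $J(S^b)$-contribution, identity (\ref{q57}) contributes the auxiliary terms $[\partial_z m_{12}(z-z_b),\, NP_{23}]$, $[m_{12}(z-z_b),\, r^{(0)}_{23}]$, and $[m_{13}(z-z_b),\, r^{(0)}_{23}]$, each of which must be shown to drop out after contraction with $S^b_{2}\, S^b_{3}$. The $P_{23}$-piece vanishes because $P_{23}\, S^b_{2} S^b_{3} = S^b_{2} S^b_{3}\, P_{23}$, so the trace of the commutator is zero by cyclicity. The two $r_{23}^{(0)}$-pieces should cancel against each other using the skew-symmetry $r^{(0)}_{21}=-r^{(0)}_{12}$ from (\ref{q364}) together with the $2\leftrightarrow 3$ symmetry of the contracting tensor $S^b_{2} S^b_{3}$. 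Parallel bookkeeping on the off-diagonal side is required to isolate precisely the combinations $I^{ba'}(S^{a'})$ and $J^{ba'}(S^{a'})$ with signs matching (\ref{q70}). Once these cancellations are made explicit, the Lax equation (\ref{q72}) holds for all $a=0,1,\ldots,n$, which then also implies Poisson commutativity of the traces $\tr L^{\hbox{\tiny{G}}}(z)^k$ via the classical $r$-matrix structure.
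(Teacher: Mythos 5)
Your proposal is essentially correct and reaches the same conclusion, but for the flows $a\geq 1$ it runs through a genuinely different key lemma than the paper does. The paper starts from the AYBE-degeneration (\ref{q58}) in the form (\ref{q741}), contracts with $S^a_2S^b_3$ to get the product formula (\ref{q742}), and only then antisymmetrizes to obtain the clean commutator identity (\ref{q743}); you instead invoke the classical Yang--Baxter equation (\ref{q307}) directly to convert $[r_{12},r_{13}]$ into $-[r_{12}+r_{13},\,r_{23}(z_b-z_a)]$ before tracing. The two routes are equivalent in effect (antisymmetrizing (\ref{q741}) kills all the $m$-terms and is exactly the CYBE restricted to this configuration), but yours is more economical for the first flows since it never produces the $m$-contributions of (\ref{q742}) that the paper has to discard. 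For the flow $t_0$ your treatment of the diagonal terms via the confluent identity (\ref{q57}) is precisely the paper's earlier derivation of $[L(S,z),M(S,z)]=L([S,J(S)],z)$ for the Euler--Arnold top (which the paper silently reuses here), and your explicit verification that the $P_{23}$-piece and the two $r^{(0)}_{23}$-pieces drop out after contraction with $S^b_2S^b_3$ is correct.

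Two points deserve tightening. First, a slot-bookkeeping slip: tracing $r_{23}(z_b-z_a)S^a_3$ over slot $3$ deposits $I^{ba}(S^a)$ into slot $2$ (not ``$I^{ab}(S^b)$ in slot $1$''), while tracing over slot $2$ against $S^b_2$ yields $\pm I^{ab}(S^b)$ in slot $3$ via the skew-symmetry $r_{23}(z_b-z_a)=-r_{32}(z_a-z_b)$; as written your labels are crossed, though the final matching with (\ref{q68}) is unaffected. Second, for the off-diagonal part of the $t_0$ computation your claim that the contraction of (\ref{q55}) matches the cross terms of (\ref{q70}) ``term by term'' is too strong: contracting (\ref{q75}) produces the extra pieces $M([S^{b},I^{ba'}(S^{a'})],z-z_b)-M([S^{a'},I^{a'b}(S^{b})],z-z_{a'})$, which have no counterpart in $\partial_{t_0}L^{\hbox{\tiny{G}}}$ and only disappear because they cancel pairwise when you sum over all ordered pairs $(b,a')$ --- this is exactly the paper's closing remark that ``the terms with $M$-matrices in the r.h.s.\ are cancelled out.'' You flag the residual bookkeeping, but this specific cancellation mechanism is the one nontrivial step of the $a=0$ case and should be stated explicitly rather than left implicit.
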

\begin{proof}
Consider the identity (\ref{q58}) in the form
\beq{q741}
\begin{split}
& r_{12}(z-z_a)r_{13}(z-z_b) - r_{23}(z_a-z_b)r_{12}(z-z_a) + r_{13}(z-z_b)r_{23}(z_a-z_b) = \\ & = m_{12}(z-z_a) + m_{23}(z_a-z_b) + m_{13}(z-z_b)\,.
\end{split}
\eq
Here we imply that $z_a\neq z_b$.
Multiply both sides by $S_2^a,S_3^b$ and taking trace $\tr_{23}$ we get
\beq{q742}
\begin{split}
& L(S^{a}, z-z_a)L(S^{b},z-z_b) = L(S^{a}I^{ab}(S^{b}),z-z_a) + L(I^{ba}(S^{a})S^{b},z-z_b) + \\ & + \frac{\tr(S^{b})}{N} M(S^{a},z-z_a) + \frac{\tr(S^{a})}{N} M(S^{b},z-z_b) + \frac{1}{N}\tr_{23}({m_{23}(z_a-z_b)S^{a}_2S^{b}_3})\,.
\end{split}
\eq
Then
\beq{q743}
\begin{split}
& [L(S^{b},z-z_b), L(S^{a},z-z_a)] = L([I^{ab}(S^{b}), S^{a}],z-z_a) + L([S^{b},I^{ba}(S^{a})],z-z_b)\,.
\end{split}
\eq
From this relation the statement of the Proposition (\ref{q72}) for $a=1,...,n$ follows.

Next, consider the identity (\ref{q55}) in the form
\beq{q75}
\begin{split}
    & [m_{13}(z-z_a), r_{12}(z-z_b)] = [r_{12}(z-z_b), m_{23}(z_b-z_a)] +\\ & + [m_{12}(z-z_b), r_{23}(z_b-z_a)] + [m_{13}(z-z_a), r_{23}(z_b-z_a)]\,.
    \end{split}
\eq
Multiply both sides by $[S_2^a,S_3^b]$ and taking trace $\tr_{23}$ we get
\beq{q76}
\begin{split}
& [L(S^{b},z-z_b), M(S^{a},z-z_a)] = L([S^{b},J^{ba}(S^{a})],z-z_b) + M([S^{b},I^{ba}(S^{a})],z-z_b)- \\ & - M([S^{a},I^{ab}(S^{b})],z-z_a)\,,
\end{split}
\eq
where
 \beq{q77}
  \begin{array}{c}
  \displaystyle{
  M\left(S^{a},z\right) = \frac{1}{N} \tr_{2}\left(m_{12}(z)S^{a}_2\right)\,.
 }
 \end{array}
 \eq
From this relation the statement of the Proposition (\ref{q72}) for $a=0$ follows.
For this purpose one should sum up both sides of (\ref{q76}). Then the terms with $M$-matrices
in the r.h.s. are cancelled out.
\end{proof}

\paragraph{Schlesinger system.} The Schlesinger system describes the monodromy preserving equation (see e.g. \cite{CLOZ}). It can viewed as non-autonomous generalization of Gaudin model. Namely, the Schlesinger Hamiltonians
are exactly the same as was given in (\ref{q66}) and (\ref{q67}). But the time variables are now $z_a$
for (\ref{q66}) and $\tau$ entering the heat equations (\ref{q60}) or (\ref{q62}) for the Hamiltonian
(\ref{q67}). That is equations of motion are now of the form:
 \beq{q78}
  \begin{array}{c}
  \displaystyle{
\frac{d}{dz_a}S^b=[S^b,I^{ba}(S^a)]\,,\quad b\neq a
 }
 \\ \ \\
   \displaystyle{
\frac{d}{dz_a}S^a=\sum\limits_{b:b\neq a}^n [I^{ab}(S^b),S^a]\,,
 }
 \end{array}
 \eq
and
 \beq{q79}
  \begin{array}{c}
  \displaystyle{
2\pi\imath\frac{d}{d\tau}S^a=[S^a,J(S^a)]+\sum\limits_{b:b\neq a}^n[S^a,J^{ab}(S^b)]\,.
 }
 \end{array}
 \eq
Similarly to the Gaudin model these equation are represented in the zero-curvature form.
\begin{predl}
Equations of motion (\ref{q78}) and (\ref{q79}) are represented in the form
 \beq{q80}
  \begin{array}{c}
  \displaystyle{
 \p_{z_a}L^{\hbox{\tiny{G}}}(z)-\p_z M^a(z)=[L^{\hbox{\tiny{G}}}(z),M^a(z)]\,,\quad a=1,..,n
 }
 \end{array}
 \eq
with $M^a(z)$ (\ref{q73}) and
 \beq{q81}
  \begin{array}{c}
  \displaystyle{
 2\pi\imath\p_{\tau}L^{\hbox{\tiny{G}}}(z)-\p_z M^0(z)=[L^{\hbox{\tiny{G}}}(z),M^0(z)]
 }
 \end{array}
 \eq
with $M^0(z)$ (\ref{q74}).
\end{predl}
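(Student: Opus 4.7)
The plan is to mimic the proof of Proposition~\ref{prop31} while carefully tracking the extra contributions that arise from the explicit dependence of $L^{\hbox{\tiny{G}}}(z)$ on the deformation parameters $z_a$ (through $r_{12}(z-z_a)$) and on $\tau$ (through the heat equation (\ref{q62})). These contributions should generate precisely the $\p_z M^a(z)$ and $\p_z M^0(z)$ correction terms that distinguish the zero-curvature equations (\ref{q80})--(\ref{q81}) from the autonomous Lax equations (\ref{q72}).

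For (\ref{q80}), I would apply the Leibniz rule to $\p_{z_a}L^{\hbox{\tiny{G}}}(z)$. The explicit contribution $-\frac{1}{N}\tr_2(\p_z r_{12}(z-z_a)S^a_2)=\p_z M^a(z)$ moves to the left-hand side as required, while the remaining piece $\sum_b L(\p_{z_a}S^b,z-z_b)$, after substituting the equations of motion (\ref{q78}) and relabeling summation indices, becomes a sum of $L$-valued commutator expressions that is computed via the key identity (\ref{q743}) (itself derived from the associative Yang--Baxter identity (\ref{q58})) to equal $[L^{\hbox{\tiny{G}}}(z),M^a(z)]$. Apart from the single new explicit-derivative term, this reproduces the Gaudin calculation verbatim.

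For (\ref{q81}), the heat equation (\ref{q62}) converts the explicit $\tau$-derivative of each $r_{12}(z-z_a)$ into $(2\pi\imath)^{-1}\p_z m_{12}(z-z_a)$, whose sum over $a$ is precisely $\p_z M^0(z)$. The dynamical part $\sum_a L(2\pi\imath\p_\tau S^a,z-z_a)$, expanded via (\ref{q79}) with the convention $J^{aa}\equiv J$, becomes the double sum $\sum_{a,b}L([S^a,J^{ab}(S^b)],z-z_a)$. To match this with $[L^{\hbox{\tiny{G}}},M^0]=\sum_{a,b}[L(S^b,z-z_b),M(S^a,z-z_a)]$, I invoke identity (\ref{q76}) for the off-diagonal pairs $a\neq b$; after an $a\leftrightarrow b$ relabeling, the two $M$-type contributions on its right-hand side cancel in the symmetric double sum, leaving only the required $L$-term.

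The main obstacle, which demands extra care compared to the Gaudin proof, is the diagonal $a=b$ contribution to $[L^{\hbox{\tiny{G}}},M^0]$, for which identity (\ref{q76}) degenerates because of the pole of $r_{23}(z_b-z_a)$ at coincident arguments. Here I would instead invoke the coincident-limit identity (\ref{q57}), which (as in the Euler--Arnold calculation described after equation (\ref{q15})) yields $[L(S^a,z-z_a),M(S^a,z-z_a)]=L([S^a,J(S^a)],z-z_a)$, matching exactly the diagonal piece of the left-hand side. No new $R$-matrix identities beyond those collected in Section~3.2 are required; the entire argument rests on combining the regularized coincident-point identity (\ref{q57}) with the off-diagonal identities (\ref{q743}) and (\ref{q76}).
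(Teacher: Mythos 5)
Your proposal is correct and follows essentially the same route as the paper, which simply notes that the proof repeats the Gaudin argument supplemented by the heat equation (\ref{q62}) for (\ref{q81}) and the property $(\p_z+\p_{z_a})r_{12}(z-z_a)=0$ for (\ref{q80}); you have merely unpacked these two observations, correctly routing the explicit-derivative terms into $\p_z M^a$ and $\p_z M^0$ and handling the diagonal contribution via (\ref{q57}) exactly as the paper's Euler--Arnold discussion does.
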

The proof repeats the one for the Gaudin model. One should also use
(\ref{q62}) for (\ref{q81}) and the obvious property $(\p_z+\p_{z_a})r_{12}(z-z_a)=0$ for (\ref{q80}).




\section{2d integrable systems from AYBE}
\setcounter{equation}{0}


\subsection{Higher rank Landau-Lifshitz equation}
Let us briefly recall the construction of higher rank Landau-Lifshitz model
from\footnote{There is a small difference between \cite{AtZ2} and the
below given construction since in this paper we use the Zakharov-Shabat equation in the form
(\ref{q04}), while in \cite{AtZ2} it is $\p_t U(z)-k\p_x V(z)=[U(z),V(z)]$, and $k$ is fixed as $k=1$.} \cite{AtZ2}.
As was explained previously, $U$-matrix has the same form
as in the finite-dimensional mechanics (\ref{q142}):
 \beq{q821}
  \begin{array}{c}
  \displaystyle{
  U(z)=L(S,z)=\frac{1}{N}\,\tr_2\Big(r_{12}(z){S}_2\Big)\,.
 }
 \end{array}
 \eq
The ansatz for $V$-matrix is based on the $R$-matrix identities and its form in the ${\rm sl}_2$
case known from \cite{Skl}:
 \beq{q822}
  \begin{array}{c}
  \displaystyle{
 V(z)=cL(T,z)-c\p_z L(S,z)+L(SE(S),z)+L(E(S)S,z)\,,
 }
 \end{array}
 \eq
where $c\in\mC$ is a constant and $T\in\Mat$ --  some dynamical matrix-valued variable. It will
be defined below. The constant $c$ appears in the additional property, which we require for the matrix $S$:
 \beq{q833}
  \begin{array}{c}
  \displaystyle{
 S^2=cS\,.
 }
 \end{array}
 \eq
This condition means that
the eigenvalues of the matrix $S$ are equal to either $0$ or $c$. We discuss it in the next subsection.

Let us rewrite the matrix $V$ in a different way. For this purpose we use the identity (\ref{q59}).
Multiply both sides by $S_2S_3$ and taking the trace $\tr_{23}$ we obtain
 \beq{q834}
  \begin{array}{c}
  \displaystyle{
 L^2(S,z)=
 }
 \\ \ \\
   \displaystyle{
 =L(SE(S),z)+L(E(S)S,z)-\p_z L(S^2,z)
 +\frac{2\tr(S)}{N}\,M(S,z)
 +\frac{1_N}{N}\,\tr_{23}\Big(m_{23}(0){S}_2{S}_3\Big)\,,
 }
 \end{array}
 \eq
where the following notation was introduced:
 \beq{q835}
  \begin{array}{c}
  \displaystyle{
E(S)=\frac{1}{N}\,\tr_2\Big(r^{(0)}_{12}{S}_2\Big)\,.
 }
 \end{array}
 \eq
Then the matrix $V$ (\ref{q822}) is written as follows:
 \beq{q836}
  \begin{array}{c}
  \displaystyle{
 V(z)=cL(T,z)+L^2(S,z)
 -2s_0\,M(S,z)
 -\frac{1_N}{N}\,\tr_{23}\Big(m_{23}(0){S}_2 S_3\Big)\,,\quad s_0=\frac{\tr(S)}{N}\,.
 }
 \end{array}
 \eq
Similar calculations lead to identity
 \beq{q837}
  \begin{array}{c}
  \displaystyle{
 [L(S,z),L(T,z)]=-\p_zL([S,T],z)+L([S,E(T)],z)+L([E(S),T],z)\,.
 }
 \end{array}
 \eq
Finally, plugging the $U$-$V$ pair into the Zakharov-Shabat equation we come to
the following equations of motion:
 \beq{q838}
  \begin{array}{c}
  \displaystyle{
  \p_t S=ck\p_x T+k\p_x(SE(S)+E(S)S)-2s_0[S,J(S)]+c[S,E(T)]+c[E(S),T]
 }
 \end{array}
 \eq
 and
 \beq{q839}
  \begin{array}{c}
  \displaystyle{
  -k\p_xS=[S,T]\,.
 }
 \end{array}
 \eq
The latter equation can be solved with respect to $T$ in the case (\ref{q833}).
Namely, using $(S-(c/2)1_N)^2=(c/2)^21_N$ one can easily verify that the following
expression solves (\ref{q839}):
 \beq{q840}
  \begin{array}{c}
  \displaystyle{
  T=-\frac{k}{c^2}\,[S,\p_x S]\,.
 }
 \end{array}
 \eq
Plugging it into (\ref{q838}) we get the final equation of motion for $S$:
 \beq{q841}
  \begin{array}{c}
  \displaystyle{
  \p_t S+\frac{k^2}{c}\,[S,\p^2_x S]-k\p_x\Big(SE(S)+E(S)S\Big)=
  }
  \\ \ \\
  \displaystyle{
  =-2s_0[S,J(S)]-\frac{k}{c}\,[S,E([S,\p_x S])]-\frac{k}{c}\,[E(S),[S,\p_x S]]\,,
 }
 \end{array}
 \eq
where $s_0=\tr(S)/N$. Notice that in the ${\rm sl}_2$ case we have $E(S)=0$ due to vanishing of $r_{12}^{(0)}$-coefficient of expansion of $R$-matrices (\ref{q48}), (\ref{q49}), (\ref{q50}) or (\ref{q51}). Therefore, in ${\rm sl}_2$ case
(\ref{q841}) indeed reduces to the original form of the Landau-Lifshitz equation (\ref{q001}).

\subsection{Special coadjoint orbits}
Let us return back to the condition (\ref{q833}). It means that eigenvalues of the matrix $S$ are fixed to be
either equal to $0$ or to $c$.

When $N-1$ eigenvalues are equal to zero and the last one is equal to $c$, the matrix $S$ becomes a rank one
matrix, that is
 \beq{q842}
  \begin{array}{c}
  \displaystyle{
S=\xi\otimes\eta\,,
 }
 \end{array}
 \eq
where $\xi$ is a $N$-dimensional column-vector, and $\eta$ is a $N$-dimensional row-vector. The scalar product
of these vectors equals $c$:
 \beq{q843}
  \begin{array}{c}
  \displaystyle{
(\eta,\xi)=\eta\xi=c=\tr(S)\,.
 }
 \end{array}
 \eq

In the rank one case some additional identities hold true, and this allows to simplify the equation of motion
(\ref{q841}).

\begin{predl}\label{prop41}
Suppose $S\in\Mat$ is of the form (\ref{q841}) and the coefficient $r_{12}^{(0)}$ of the expansion (\ref{q363})
satisfies the property\footnote{The property (\ref{q844}) holds true for the elliptic $R$-matrix and its degenerations. It follows from the Fourier symmetry property $R_{12}^\hbar(z)=R_{12}^z(\hbar)P_{12}$.}
 \beq{q844}
  \begin{array}{c}
  \displaystyle{
r_{12}^{(0)}=r_{12}^{(0)}P_{12}\,.
 }
 \end{array}
 \eq
 Then
 \beq{q845}
  \begin{array}{c}
  \displaystyle{
SE(S)=0\,.
 }
 \end{array}
 \eq
\end{predl}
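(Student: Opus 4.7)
The plan is to prove $SE(S)=0$ by a short component computation, after first extracting the appropriate symmetry of the matrix elements of $r^{(0)}_{12}$. First I would translate the hypothesis (\ref{q844}) into a statement about components. Expanding $P_{12}=\sum_{a,b}E_{ab}\otimes E_{ba}$ and multiplying, one finds
\[
r^{(0)}_{12}P_{12}=\sum_{\al,\be,\ga,\de} r^{(0)}_{\al\be,\ga\de}\, E_{\al\de}\otimes E_{\ga\be},
\]
so (\ref{q844}) is equivalent to the symmetry $r^{(0)}_{\al\be,\ga\de}=r^{(0)}_{\al\de,\ga\be}$. Combining this with the skew-symmetry $r^{(0)}_{12}=-r^{(0)}_{21}$ from (\ref{q364}), which reads $r^{(0)}_{\al\be,\ga\de}=-r^{(0)}_{\ga\de,\al\be}$ on entries, I obtain the key antisymmetry
\[
r^{(0)}_{\al\be,\ga\de}=-r^{(0)}_{\ga\be,\al\de},
\]
i.e., antisymmetry under the swap $\al\leftrightarrow\ga$ of the two ``first'' tensor indices.

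Next I would compute $SE(S)$ componentwise for the rank-one matrix $S_{ij}=\xi_i\eta_j$. Unfolding the definition (\ref{q835}) along the lines of (\ref{q142}) gives $E(S)_{ij}=\frac{1}{N}\sum_{k,l}r^{(0)}_{ij,kl}\,\xi_l\eta_k$, and therefore
\[
[SE(S)]_{pq}=\frac{\xi_p}{N}\sum_{m,k,l}\eta_m\eta_k\,\xi_l\, r^{(0)}_{mq,kl}.
\]
Finally, I apply the antisymmetry from the previous step in the form $r^{(0)}_{mq,kl}=-r^{(0)}_{kq,ml}$ and then relabel the dummy indices $m\leftrightarrow k$. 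The factor $\eta_m\eta_k$ is symmetric under this relabeling while the tensor flips sign, so the sum equals minus itself and hence vanishes, yielding $[SE(S)]_{pq}=0$ for all $p,q$.

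The argument is essentially bookkeeping once the right symmetry of $r^{(0)}_{12}$ is identified; the only genuinely substantive point is to recognize that the hypothesis (\ref{q844}) combined with the skew-symmetry of $r^{(0)}_{12}$ produces antisymmetry in precisely the pair of indices that becomes contracted with the symmetric product $\eta_m\eta_k$. Care is also needed to track on which side $P_{12}$ acts in (\ref{q844}) so that the component relation $r^{(0)}_{\al\be,\ga\de}=r^{(0)}_{\al\de,\ga\be}$ comes out correctly, but this is immediate from the expansion of $P_{12}$ above.
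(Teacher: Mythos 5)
Your proof is correct, and it is essentially the paper's argument recast in explicit components: both proofs combine the hypothesis $r^{(0)}_{12}=r^{(0)}_{12}P_{12}$, the skew-symmetry $r^{(0)}_{12}=-r^{(0)}_{21}$, and the rank-one structure of $S$ to exhibit $SE(S)$ as equal to its own negative. The paper carries this out with permutation-operator and trace manipulations (via the intermediate identity $S_1r^{(0)}_{12}S_1=S_1\tr_3(r^{(0)}_{32}S_3)$), whereas you contract the derived index antisymmetry against the symmetric factor $\eta_m\eta_k$; the underlying mechanism is the same.
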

\begin{proof}
Besides (\ref{q842}) and (\ref{q844}) we are also going to use the skew-symmetry (\ref{q364})
 \beq{q846}
  \begin{array}{c}
  \displaystyle{
r_{12}^{(0)}=-r_{21}^{(0)}=-P_{12}r_{12}^{(0)}P_{12}\,.
 }
 \end{array}
 \eq
First, consider the following expression
 \beq{q847}
  \begin{array}{c}
  \displaystyle{
 S_1r_{12}^{(0)}S_1=(\xi\otimes\eta)_1\sum\limits_{i,j,k,l=1}^N r^{(0)}_{ij,kl} E_{ij}\otimes E_{kl}
 (\xi\otimes\eta)_1=\sum\limits_{i,j,k,l=1}^N r^{(0)}_{ij,kl}(\xi\otimes\eta)_1(\eta E_{ij}\xi)(E_{kl})_2=
 }
 \\
  \displaystyle{
=S_1\sum\limits_{i,j,k,l=1}^N r^{(0)}_{ij,kl}\tr(S E_{ij})(E_{kl})_2=S_1\tr_3(r^{(0)}_{32}S_3)\,.
 }
 \end{array}
 \eq
Then
 \beq{q848}
  \begin{array}{c}
  \displaystyle{
SE(S)=S_1\tr_2(r_{12}^{(0)}S_2)=\tr_2(S_1r_{12}^{(0)}S_2)\stackrel{(\ref{q844})}{=}\tr_2(S_1r_{12}^{(0)}P_{12}S_2)=
 }
 \\ \ \\
  \displaystyle{
=\tr_2(S_1r_{12}^{(0)}S_1P_{12})\stackrel{(\ref{q847})}{=}\tr_{23}(S_1r_{32}^{(0)}S_3P_{12})
=\tr_{23}(S_1P_{12}r_{31}^{(0)}S_3)=\tr_{3}(S_1r_{31}^{(0)}S_3)\stackrel{(\ref{q846})}{=}0\,.
 }
 \end{array}
 \eq
This finishes the proof.
\end{proof}

In a similar way one can derive more complicated relations. In particular,
the following identity holds \cite{AtZ2}:
 \beq{q849}
 \begin{array}{c}
  \displaystyle{
 c\p_x\Big(SE(S)+E(S)S\Big)+
  [E([S,\p_x S]),S]+[[S,\p_x S],E(S)]=2c[E(\p_x S),S]\,.
 }
 \end{array}
 \eq
Then the equation of motion (\ref{q841}) take the form
 \beq{q850}
  \begin{array}{c}
  \displaystyle{
  \p_t S+\frac{k^2}{c}\,[S,\p^2_x S]+2s_0\, [S,J(S)]+2k[S,E(\p_x S)]=0
 }
 \end{array}
 \eq
or (\ref{q851}).
The latter equation has the Hamiltonian formulation, see \cite{AtZ2}.

\subsection{Principle chiral model}
Now we proceed to the multipole case, and first we consider $U$-matrix with simple poles at two
distinct points $z_a$ and $z_b\neq z_a$.
Introduce $U$-matrix
\beq{q862}
\begin{array}{c}
\displaystyle{
U(z) = L(S^{a},z-z_a) +  L(S^{b},z-z_b)\,.
 }
\end{array}
\eq

\subsubsection{Notations and identities}
Let us summarize notations. For any $S\in\Mat$
\beq{q852}
    \begin{split}
        & L\left(S,z\right) = \frac{1}{N}\, \tr_{2}\left(r_{12}(z)S_2\right)\,,\qquad
        E\left(S\right) = \frac{1}{N}\, \tr_{2}\left(r^{(0)}_{12}S_2\right)\,,
         \\
        & I^{ab}\left(S\right) = \frac{1}{N}\, \tr_{2}\left(r_{12}(z_a-z_b)S_2\right)\,,\qquad
        M\left(S,z\right) = \frac{1}{N}\, \tr_{2}\left(m_{12}(z)S_2\right)\,,
         \\
        & J\left(S\right) = \frac{1}{N}\, \tr_{2}\left(m_{12}(0)S_2\right)\,,\qquad
         J^{ab}\left(S\right) = \frac{1}{N}\, \tr_{2}\left(m_{12}(z_a-z_b)S_2\right)\,.
    \end{split}
\eq
Similarly to derivations in the Landau-Lifshitz model we need a set of identities.
Multiplying both sides of (\ref{q55}) written as
\beq{q855}
\begin{split}
    & \left[m_{13}\left(z-z_a\right), r_{12}\left(z-z_b\right)\right] = \left[r_{12}\left(z-z_b\right), m_{23}\left(z_b-z_a\right)\right] +\\ & + \left[m_{12}\left(z-z_b\right), r_{23}\left(z_b-z_a\right)\right] + \left[m_{13}\left(z-z_a\right), r_{23}\left(z_b-z_a\right)\right]
    \end{split}
\eq
by $S_2S_3$ and taking the trace $\tr_{23}$ we obtain
\beq{q854}
\begin{array}{c}
\displaystyle{
[L(S^{b},z-z_b), M(S^{a},z-z_a)] =
}
\\ \ \\
\displaystyle{
 =L([S^{b},J^{ba}(S^{a})],z-z_b) + M([S^{b},I^{ba}(S^{a})],z-z_b)- M([S^{a},I^{ab}(S^{b})],z-z_a).
 }
\end{array}
\eq
In the same way from (\ref{q58}) one gets
\beq{q856}
\begin{split}
& L(T^{a}, z-z_a)L(S^{b},z-z_b) = L(T^{a}I^{ab}(S^{b}),z-z_a) + L(I^{ba}(T^{a})S^{b},z-z_b) + \\ & + \frac{\tr(S^{b})}{N}\, M(T^{a},z-z_a) + \frac{\tr(T^{a})}{N}\, M(S^{b},z-z_b) + \frac{1}{N}\,\tr_{23}({m_{23}(z_a-z_b)T^{a}_2S^{b}_3})\,,
\end{split}
\eq
and, therefore,
\beq{q857}
\begin{split}
& [L(S^{b},z-z_b), L(T^{a},z-z_a)] = L([I^{ab}(S^{b}), T^{a}],z-z_a) + L([S^{b},I^{ba}(T^{a})],z-z_b).
\end{split}
\eq
Also, from (\ref{q56}) one may deduce
\beq{q8571}
    \begin{split}
        &  \tr_{23}\Big(m_{23}(0)S^{a}_2[I^{ab}(S^{b}),S^{a}]_{3}\Big) + \tr_{23}\Big(m_{23}(0)[I^{ab}(S^{b}),S^{a}]_{2}S^{a}_3\Big)  + \\ &  + \tr_{23}\Big(m_{23}(z_a-z_b)S^{a}_2[S^{b},I^{ba}(S^{a})]_{3}\Big) +  \tr_{23}\Big(m_{23}(z_b-z_a)[S^{b},I^{ba}(S^{a})]_{2}S^{a}_3\Big) = 0\,.
    \end{split}
\eq
The following identities were obtained in a similar manner in \cite{AtZ2}.
The first is
\beq{q858}
\begin{split}
    & -c\partial_{z}L\left(S^{a},z-z_a\right) +  L\left(E\left(S^{a}\right)S^{a},z-z_a\right) + L\left(S^{a}E\left(S^{a}\right),z-z_a\right) = \\ & = L^{2}\left(S^{a},z-z_a\right) - \frac{2 \tr\left(S^{a}\right)}{N}M\left(S^{a}, z-z_a\right) - \frac{1}{N}\,1_N\,\tr_{23}\left({m_{23}(0)S^{a}_2S^{a}_3}\right).
    \end{split}
\eq
The second reads
\beq{q859}
\begin{array}{c}
\displaystyle{
  \left[L\left(S^{a},z-z_a\right) , L\left(T^{a},z-z_a\right)\right] =
  }
 \\ \ \\
\displaystyle{
  =-\partial_{z}L\left(\left[S^{a},T^{a}\right],z-z_{a}\right)
  + L\left(\left[S^{a},E\left(T^{a}\right)\right],z-z_{a}\right) + L\left(\left[E\left(S^{a}\right),T^{a}\right],z-z_{a}\right).
  }
\end{array}
\eq
The third one is
\beq{q860}
\begin{split}
  \left[L\left(S^{a},z-z_a\right), M\left(S^{a}, z-z_a\right)\right] =  L\left(\left[S^{a},J\left(S^{a}\right)\right], z-z_a\right)
    \end{split}
\eq
and the fourth is
\beq{q861}
\begin{split}
& L\left(A, z-z_a\right)L\left(B,z-z_a\right) = L\left(A E\left(B\right),z-z_a\right) + L\left(E\left(A\right)B,z-z_a\right) - \\ & - \partial_{z}L\left(AB,z-z_a\right) + \frac{\tr\left(B\right)}{N} M\left(A,z-z_a\right) + \frac{\tr\left(A\right)}{N} M\left(B,z-z_a\right) + \frac{1_N}{N}\,\tr_{23}\left({m_{23}(0)A_2 B_3}\right)
\end{split}
\eq
for any $A,B\in\Mat$.

\subsubsection{Equations of motion for the first flows}
First we describe direct 1+1 analogues of the dynamical flows (\ref{q68}).
Following \cite{Z11} we choose $V$-matrix for this case in the same form as
it is known in the finite-dimensional case (\ref{q73}):
\beq{q863}
\begin{array}{c}
\displaystyle{
V^a(z) =  -L\left(S^{a},z-z_a\right)\,.
 }
\end{array}
\eq
Plugging it into the Zakharov-Shabat equation
  \beq{q864}
  \begin{array}{l}
  \displaystyle{
 \p_{t_a} U(z)-k\p_x V^a(z)=[U(z),V^a(z)]\,,
 }
 \end{array}
 \eq
we obtain the following equations of motion similarly to the Proposition \ref{prop31}:
 \beq{q865}
 \left\{
  \begin{array}{l}
  \displaystyle{
\p_{t_a}S^b=[S^b,I^{ba}(S^a)]\,,
 }
 \\ \ \\
   \displaystyle{
\p_{t_a}S^a+k\p_x S^a= [I^{ab}(S^b),S^a]\,.
 }
 \end{array}
 \right.
 \eq
In the same way for
\beq{q866}
\begin{array}{c}
\displaystyle{
V^b(z) =  -L(S^{b},z-z_b)
 }
\end{array}
\eq
we have
 \beq{q867}
 \left\{
  \begin{array}{l}
  \displaystyle{
\p_{t_b}S^a=[S^a,I^{ab}(S^b)]\,,
 }
 \\ \ \\
   \displaystyle{
\p_{t_b}S^b+k\p_x S^b= [I^{ba}(S^a),S^b]\,.
 }
 \end{array}
 \right.
 \eq
Finally, consider
\beq{q868}
\begin{array}{c}
\displaystyle{
V(z) = L(S^{a},z-z_a) -L(S^{b},z-z_b)\,.
 }
\end{array}
\eq
Then the equations of motion take the form
 \beq{q870}
 \left\{
  \begin{array}{l}
  \displaystyle{
\p_{t}S^a-k\p_xS^a=-2[S^a,I^{ab}(S^b)]\,,
 }
 \\ \ \\
   \displaystyle{
\p_{t}S^b+k\p_x S^b= 2[S^b, I^{ba}(S^a)]\,.
 }
 \end{array}
 \right.
 \eq
This one system is a generalization of the equations (\ref{q03}).
For the rational XXX $r$-matrix $r_{12}(z)=P_{12}/z$ the equations (\ref{q03}) are reproduced from (\ref{q870}).

\subsubsection{Equations of motion for the second flows}
The 1+1 generalization of the Gaudin flow (\ref{q70}), (\ref{q74}) is more complicated.
In fact, for the single marked point case it is the one described as the higher rank Landau-Lifshitz
model. Here we extend it to two marked points.
Based on results of \cite{Z11} and the construction of the
Landau-Lifshitz model (\ref{q822}) consider the following ansatz for $V$-matrix:
\beq{q869}
\begin{array}{c}
\displaystyle{
V^a(z) =
}
\\ \ \\
\displaystyle{
=-c\partial_{z}L\left(S^{a},z-z_a\right) + c L\left(T^{a},z-z_a\right) +  L\left(E\left(S^{a}\right)S^{a},z-z_a\right) + L\left(S^{a}E\left(S^{a}\right),z-z_a\right)\,,
 }
\end{array}
\eq
where $T^a\in\Mat$ is an auxiliary matrix to be defined.

\begin{theor}\label{th1}
Suppose $S^a$ is the matrix described by the special coadjoint orbit
\beq{q875}
\begin{array}{c}
\displaystyle{
(S^a)^2=cS^a\,.
 }
\end{array}
\eq
The Zakharov-Shabat equations (\ref{q864}) with $U$-matrix (\ref{q862}) and $V$-matrix
(\ref{q869}) provide the following equations of motion:
\beq{q871}
    k\partial_{x}S^{a} = \left[S^{a}, I^{ab}(S^{b}) - T^{a}\right]\,,
\eq
\beq{q872}
\begin{array}{c}
     \partial_{t_a}S^{a} =
    \\ \ \\
    =k\partial_{x}(E(S^{a})S^{a} + S^{a}E(S^{a})) + kc\partial_{x}T^{a} + c\left[S^{a},E(T^{a})\right] + c\left[E(S^{a}),T^{a}\right] + c\left[I^{ab}(S^{b}),T^{a}\right] -
    \\ \ \\
    - \frac{2\tr(S^{a})}{N}\left[S^{a},J(S^{a})\right] + S^{a}E(\left[I^{ab}(S^{b}),S^{a}\right]) + E(\left[I^{ab}(S^{b}),S^{a}\right])S^{a} + E(S^{a})\left[I^{ab}(S^{b}),S^{a}\right] +
    \\ \ \\
    + \left[I^{ab}(S^{b}),S^{a}\right]E(S^{a}) + S^{a}I^{ab}(\left[S^{b},I^{ab}(S^{a})\right]) + I^{ab}(\left[S^{b},I^{ab}(S^{a})\right])S^{a}
\end{array}
\eq
and
\beq{q873}
    \begin{split}
        & \partial_{t_a}S^{b} = c \left[S^{b},I^{ba}(T^{a})\right] - \frac{2 \tr(S^{a})}{N}\left[S^{b},J^{ba}(S^{a})\right] + \Big[S^{b}, \Big(I^{ba}(S^{a})\Big)^2\Big]\,.
    \end{split}
\eq
\end{theor}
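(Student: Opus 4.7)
The strategy is to substitute $U(z)$ from (\ref{q862}) and $V^a(z)$ from (\ref{q869}) into the Zakharov-Shabat equation $\p_{t_a}U(z)-k\p_x V^a(z)=[U(z),V^a(z)]$, then match Laurent coefficients separately at $z=z_a$ and $z=z_b$. As a preliminary simplification, I would use identity (\ref{q858}) to rewrite $V^a$ in the form
\begin{equation*}
V^a(z)=cL(T^a,z-z_a)+L^2(S^a,z-z_a)-\frac{2\tr(S^a)}{N}M(S^a,z-z_a)-\frac{1_N}{N}\tr_{23}\bigl(m_{23}(0)S^a_2 S^a_3\bigr),
\end{equation*}
which exhibits the pole structure at $z_a$ explicitly and makes $V^a(z_b)$ easy to evaluate.

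Equation (\ref{q873}) is obtained from the residue at $z=z_b$. Since $V^a$ is regular there and only $L(S^b,z-z_b)$ contributes a pole from $U$, one gets $\p_{t_a}S^b=[S^b,V^a(z_b)]$. The evaluations $L(S^a,z_b-z_a)=I^{ba}(S^a)$ and $M(S^a,z_b-z_a)=J^{ba}(S^a)$ turn the three non-scalar contributions into the three claimed terms, while the scalar $\tr_{23}$-piece commutes with $S^b$ and drops out.

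Equation (\ref{q871}) comes from the $1/(z-z_a)^2$ coefficient. Using $L(S^a,z-z_a)=S^a/(z-z_a)+E(S^a)+O(z-z_a)$ and the constraint $(S^a)^2=cS^a$, the potential $1/(z-z_a)^3$ term in $[U,V^a]$ vanishes (as $[S^a,S^a]=0$), and a short calculation using $[S^a,E(S^a)S^a+S^aE(S^a)]=c[S^a,E(S^a)]$ collapses the $1/(z-z_a)^2$ coefficient of the right-hand side to $c[S^a,T^a-I^{ab}(S^b)]$. This must equal the $1/(z-z_a)^2$ coefficient of $-k\p_x V^a$, which is $-kc\p_x S^a$, yielding (\ref{q871}) after dividing by $c$.

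Equation (\ref{q872}) --- the residue at $z=z_a$ --- is the main technical step and, in my expectation, the sole real obstacle. I would decompose $[U,V^a]$ into six commutators. The three commutators with $L(S^a,z-z_a)$ are handled by (\ref{q859}), (\ref{q860}), and the trivial $[L(S^a),L^2(S^a)]=0$. The commutators with $L(S^b,z-z_b)$ involving $L(T^a,\cdot)$ and $M(S^a,\cdot)$ are handled by (\ref{q857}) and (\ref{q854}). The remaining term $[L(S^b,z-z_b),L^2(S^a,z-z_a)]$ requires opening by Leibniz as $[L(S^b),L(S^a)]L(S^a)+L(S^a)[L(S^b),L(S^a)]$, and then reduction of the resulting same-point products via (\ref{q861}) and the mixed-point products via (\ref{q856}). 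The $M$-type terms produced by (\ref{q854}) and by the Leibniz expansion must cancel pairwise, and the residual $\tr_{23}(m_{23}(0)\cdots)$ and $\tr_{23}(m_{23}(z_a-z_b)\cdots)$ scalar contributions are disposed of by the identity (\ref{q8571}). After taking the residue at $z=z_a$ (so that $L(\cdot,z-z_a)\mapsto\cdot$) and combining with $\p_{t_a}S^a$ and the $x$-derivatives of the simple-pole piece of $V^a$, one recovers precisely (\ref{q872}). The difficulty lies not in any single identity but in the bookkeeping: roughly a dozen groupings have to line up, and the cancellation of the $M$-terms and the $\tr_{23}$-terms is what forces one to use every available identity.
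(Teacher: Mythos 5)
Your proposal matches the paper's proof essentially step for step: the paper likewise rewrites $V^a$ via (\ref{q858}) to expose the pole structure (its equation (\ref{q874})), reads off (\ref{q871}), (\ref{q872}) and (\ref{q873}) from the coefficients of $(z-z_a)^{-2}$, $(z-z_a)^{-1}$ and $(z-z_b)^{-1}$ respectively, reduces the six commutators with the identities (\ref{q854})--(\ref{q857}) and (\ref{q859})--(\ref{q861}), and cancels the scalar $\tr_{23}$-terms with (\ref{q8571}). The approach and the bookkeeping you describe are the same as the paper's.
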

\begin{proof}
The proof is based on the set of identities presented in the beginning of this subsection.
Main idea is as follows. For computation of the commutator in the r.h.s. of the Zakharov-Shabat equation
it is useful to rewrite $V$-matrix through (\ref{q858}). This gives
\beq{q874}
    \begin{split}
        & \partial_{t_a}L(S^{a},z-z_a)+\partial_{t}L(S^{b},z-z_b) + kc\partial_{z}\partial_{x}L(S^{a},z-z_a) - k\partial_{x}L(E(S^{a})S^{a},z-z_a) - \\ &-k\partial_{x}L(S^{a}E(S^{a}),z-z_a) - kc\partial_{x}L(T^{a},z-z_a) = c[L(S^{a},z-z_a) , L(T^{a},z-z_a)] +
        \\ & + c[L(S^{b},z-z_b) , L(T^{a},z-z_a)] +
       \\ & + [L(S^{a},z-z_a) , L^{2}(S^{a},z-z_a) - \frac{2 \tr(S^{a})}{N}M(S^{a}, z-z_a)] +
        \\ & + [L(S^{b},z-z_b) , L^{2}(S^{a},z-z_a) - \frac{2 \tr(S^{a})}{N}M(S^{a}, z-z_a)]\,.
    \end{split}
\eq
Next one should use the identities (\ref{q854})-(\ref{q857}) and (\ref{q859})-(\ref{q861}).
The terms proportional to $1_N$ are cancelled out via (\ref{q8571}).
After cumbersome calculations one gets the statement.
The equation (\ref{q871}) arises in the second order pole -- as coefficient behind $(z-z_a)^{-2}$.
Equations (\ref{q872}) are (\ref{q873}) come from the poles $(z-z_a)^{-1}$ and $(z-z_b)^{-1}$
respectively.
\end{proof}

Due to the property (\ref{q875}) the equation (\ref{q871}) can be solved similarly to (\ref{q840}):
\beq{q876}
\begin{array}{c}
\displaystyle{
T^{a} = -c^{-2}k\left[S^{a},\partial_{x}S^{a}\right] + I^{ab}(S^{b})\,.
 }
\end{array}
\eq
Consider two examples.
\paragraph{Example: XXX rational $r$-matrix.}
Then
\beq{q877}
\begin{array}{c}
\displaystyle{
V^a(z)= c\frac{S^{a}}{(z-z_{a})^{2}}-\frac{c^{-1}k}{z-z_{a}}\left[S^{a},\partial_{x}S^{a}\right] + \frac{cS^{b}}{(z-z_{a})(z_{a}-z_{b})}\,.
 }
\end{array}
\eq
This case was studied in \cite{Z11}.

\paragraph{Example: ${\rm sl}_2$ elliptic $r$-matrix.}
Then
\beq{q878}
\begin{array}{c}
\displaystyle{
V^a(z)=
}
\\
\displaystyle{
=c \sum_{\alpha = 1}^{3}\sigma_{\alpha}\left(\varphi_{\alpha}(z_{a}-z_{b})\varphi_{\alpha}(z-z_{a})S^{a}_{\alpha} - \partial_{z}\varphi_{\alpha}(z-z_{a})S^{a}_{\alpha} - c^{-2}k \varphi_{\alpha}(z-z_{a})\left[S^{a},\partial_{x}S^{a}\right]_{\alpha} \right)\,.
 }
\end{array}
\eq
This case was studied in \cite{Z11} as well.

\subsubsection{Minimal orbits}
The Landau-Lifshitz equation was simplified in the case of the
''minimal orbit'' (\ref{q842}). Let us study this possibility for the obtained equation
(\ref{q873}).
Suppose
 \beq{q8781}
  \begin{array}{c}
  \displaystyle{
S^a=\xi^a\otimes\eta^a\,,\quad S^b=\xi^b\otimes\eta^b\,,
 }
 \end{array}
 \eq
where $\xi^{a,b}$ are  $N$-dimensional column-vectors, and $\eta^{a,b}$ are $N$-dimensional row-vectors. The scalar products are assumed to be
 \beq{q8782}
  \begin{array}{c}
  \displaystyle{
\eta^a\xi^a=\eta^b\xi^b=c=\tr(S^a)=\tr(S^b)\,.
 }
 \end{array}
 \eq
Together with the property of $r_{12}^{(0)}$ (\ref{q844}) this leads to several identities \cite{AtZ2}.
For $i=a,b$
 \beq{q8783}
    \begin{split}
       & S^{i}E(S^{i}) =0\,,
    \end{split}
\eq
 \beq{q8784}
    \begin{split}
        & S^{i}E(S^{i}_{x}S^{i})= 0\,,
    \end{split}
\eq
 \beq{q8785}
    \begin{split}
        & S^{i}E(S^{i}S^{i}_{x}) = - c S^{i}_{x} E(S^{i})\,,
    \end{split}
\eq
 \beq{q8786}
    \begin{split}
        & c\partial_{x}(S^iE(S^i)+E(S^i)S^i) + \left[E(\left[S^i,S^i_{x}\right]),S^i\right] + \left[\left[S^i,S^i_{x}\right],E(S^i)\right] = 2c\left[E(S^i_{x}),S^i\right]\,.
    \end{split}
\eq
Besides (\ref{q8783})-(\ref{q8786}) we also several additional identities.
\begin{predl}
Suppose $S^a$ and $S^b$ satisfy the properties (\ref{q8781})-(\ref{q8782})
and $r_{12}^{(0)}$ satisfies (\ref{q844}). Then
\beq{q8800}
   \displaystyle{
   S^{a}E(I^{ab}(S^{b})S^{a})=0\,,
   }
\eq
\beq{q8801}
   \displaystyle{
   S^{a}E(S^{a}I^{ab}(S^{b})) =-S^{a}I^{ab}(S^{b})E(S^{a})\,,
   }
\eq
\beq{q8802}
   \displaystyle{
   E(I^{ab}(S^{b})S^{a})S^{a} =E(S^{a})I^{ab}(S^{b})S^{a}\,.
   }
\eq
\end{predl}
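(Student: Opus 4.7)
The plan is to prove all three identities by direct $R$-matrix calculations built from four ingredients: (a) the rank-one reduction $SMS=\tr(MS)\,S$ for any $M\in\Mat$ when $S=\xi\otimes\eta$; (b) the trace identity $\tr_2\bigl((X\otimes Y)P_{12}\bigr)=XY$; (c) the consequence $\tr_1(r^{(0)}_{12}X_1)=-NE(X)$ of the skew-symmetry $r^{(0)}_{12}=-r^{(0)}_{21}$ from (\ref{q364}); and (d) the Fourier-symmetry (\ref{q844}), $r^{(0)}_{12}=r^{(0)}_{12}P_{12}$, together with the commutation rules $P_{12}X_1=X_2P_{12}$ and $P_{12}X_2=X_1P_{12}$. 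The role of (d) is to move the slot-$2$ operator through which $E$ is defined back into slot $1$, so that two copies of $S^a$ end up on the same side of some separable factor of $r^{(0)}_{12}$ and (a) applies.

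For (\ref{q8800}) I would write $S^aE(I^{ab}(S^b)S^a)=\frac{1}{N}\tr_2\bigl(S^a_1 r^{(0)}_{12}(I^{ab}(S^b)S^a)_2\bigr)$, use (d) to replace $r^{(0)}_{12}$ by $r^{(0)}_{12}P_{12}$, commute $P_{12}$ past the slot-$2$ factor, and arrive at $\frac{1}{N}\tr_2\bigl(S^a_1 r^{(0)}_{12} I^{ab}(S^b)_1 S^a_1 P_{12}\bigr)$. Expanding $r^{(0)}_{12}=\sum_k a^k\otimes b^k$ and applying (a) in slot $1$ gives $S^a a^k I^{ab}(S^b) S^a=\tr\bigl(a^k I^{ab}(S^b) S^a\bigr)S^a$; the surviving scalars combine via (c) into $-NE(I^{ab}(S^b)S^a)$ in slot $2$, and (b) collapses $\tr_2(\,\cdot\, P_{12})$ to yield $S^aE(I^{ab}(S^b)S^a)=-S^aE(I^{ab}(S^b)S^a)$, forcing the expression to vanish. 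The proof of (\ref{q8801}) is strictly parallel: after the same $P_{12}$-transpositions one obtains $S^a_1 r^{(0)}_{12} S^a_1 I^{ab}(S^b)_1 P_{12}$, where (a) yields $S^a a^k S^a=\tr(a^k S^a)\,S^a$, (c) gives $\sum_k\tr(a^k S^a)b^k=-NE(S^a)$, and (b) produces $-S^aI^{ab}(S^b)E(S^a)$. For (\ref{q8802}) the two copies of $S^a$ are driven into slot $2$ instead: writing $E(I^{ab}(S^b)S^a)S^a=\frac{1}{N}\tr_2\bigl(r^{(0)}_{12}(I^{ab}(S^b)S^a)_2 S^a_1\bigr)$ and using $P_{12}S^a_1=S^a_2P_{12}$, the rank-one collapse becomes $S^a b^k S^a=\tr(b^k S^a)\,S^a$ in slot $2$, and the residual sum $\sum_k\tr(b^k S^a)a^k=NE(S^a)$ produces the required $E(S^a)I^{ab}(S^b)S^a$.

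No step presents any real obstacle beyond sign bookkeeping and the careful placement of $P_{12}$; the whole argument is a mechanical adaptation of the proof of Proposition \ref{prop41}, with $I^{ab}(S^b)$ simply inserted between the two copies of $S^a$ that were already present there. The only substantive point to verify once is (c), which is an immediate consequence of the skew-symmetry (\ref{q364}) applied to the coefficient $r^{(0)}_{12}$ of the expansion (\ref{q363}).
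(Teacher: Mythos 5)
Your proof is correct and follows essentially the same route as the paper's: the rank-one collapse $SMS=\tr(MS)\,S$, the Fourier symmetry (\ref{q844}) to transport factors between tensor slots via $P_{12}$, and the skew-symmetry of $r^{(0)}_{12}$ to produce either the sign flip forcing (\ref{q8800}) to vanish or the residual $E(S^a)$ factors in (\ref{q8801})--(\ref{q8802}). Your single-partial-trace formulation with $r^{(0)}_{12}=\sum_k a^k\otimes b^k$ is just a tidier rendition of the paper's quadruple-trace computation (\ref{q8803})--(\ref{q8805}); all signs and normalizations check out.
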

\begin{proof}
The proof is similar to the one presented in Proposition \ref{prop41}. Consider (\ref{q8800}).
Notice first that
\beq{q8803}
   \displaystyle{
   S_{1} r_{12}^{(0)}r_{13}(z_a-z_b)S_{1} =S_{1}\tr_{4}(r_{42}^{(0)}r_{43}(z_a-z_b)S_{4})\,.
   }
\eq
Indeed,
\beq{q8804}
    \begin{split}
        & S_{1} r_{12}^{(0)}r_{13}(z_a-z_b)S_{1} = \\
        & \sum\limits_{i,j,k,l,m,n,f,e=1}^N r_{ijkl}^{(0)}r_{mnfe}(z_a-z_b)(\psi\otimes\eta)_{1}(E_{ij}E_{mn})_{1}(\psi\otimes\eta)_{1}(E_{kl})_{2}(E_{mn})_{3} = \\ &
        =\sum\limits_{i,j,k,l,m,n,f,e=1}^N  r_{ijkl}^{(0)}r_{mnfe}(z_a-z_b)(\psi\otimes\eta)_{1}(\eta E_{ij}E_{mn} \psi)_{1}(E_{kl})_{2}(E_{mn})_{3} =
        \\ &
        = \sum\limits_{i,j,k,l,m,n,f,e=1}^N r_{ijkl}^{(0)}r_{mnfe}(z_a-z_b)(\psi\otimes\eta)_{1}\tr( E_{ij}E_{mn}S)_{1}(E_{kl})_{2}(E_{mn})_{3} =\\
       &  = S_{1}\tr_{4}(r_{42}^{(0)}r_{43}(z_a-z_b)S_{4})\,.
    \end{split}
\eq
Then
\beq{q8805}
    \begin{split}
        &  S^{a}E(I^{ab}(S^{b})S^{a}) = \tr_{23}(S^{a}_{1}r_{12}^{(0)}r_{23}(z_a-z_b)S^{b}_{3}S^{a}_{2})=  \tr_{23}(S^{a}_{1}r_{12}^{(0)}r_{13}(z_a-z_b)S^{a}_{1}P_{12}S^{b}_{3}) =
        \\ & = \tr_{234}(S^{a}_{1}r_{42}^{(0)}r_{43}(z_a-z_b)S^{a}_{4}P_{12}S^{b}_{3}) = \tr_{234}(P_{12}S^{a}_{2}r_{42}^{(0)}r_{43}(z_a-z_b)S^{a}_{4}S^{b}_{3}) =
        \\ & =\tr_{34}(S^{a}_{1}r_{41}^{(0)}r_{43}(z_a-z_b)S^{a}_{4}S^{b}_{3}) = -S^{a}E(I^{ab}(S^{b})S^{a}) = 0\,.
    \end{split}
\eq
The identities (\ref{q8801}) and (\ref{q8802}) are proved in the same way.
\end{proof}

The upper identities allows to simplify the equation (\ref{q873}).
Namely, we get (here we have already substituted $T^a$ from (\ref{q876})):
\beq{q8806}
    \begin{split}
        & \partial_{t_a}S^{a} = kc\partial_{x}(I^{ab}(S^{b})) - c^{-1}k^{2} \left[S^{a}, \partial_{x}^{2}S^{a}\right] - \frac{2tr(S^{a})}{N}\left[S^{a}, J(S^{a})\right] + \\ & +
        c\left[S^{a},E(I^{ab}(S^b))\right] + c\left[E(S^{a}),I^{ab}(S^{b})\right] -  c^{-1}k\left[I^{ab}(S^{b}),\left[S^{a},\partial_{x}S^{a}\right]\right] + \\ &  +
        S^{a}I^{ab}\left(\left[S^{b},I^{ba}(S^{a})\right]\right) +I^{ab}\left(\left[S^{b},I^{ba}(S^{a})\right]\right)S^{a}  + 2k\left[E(S^{a}_{x},S^{a})\right]  +\\
        & +2 E(S^{a})I^{ab}(S^{b})S^{a} -  E(S^{a})S^{a}I^{ab}(S^{b})  -  E\left(S^{a}I^{ab}(S^{b})\right)S^{a}\,.
    \end{split}
\eq
Let us mention that the Hamiltonian structure for the derived equations is unknown. It is an interesting
ad important problem. In fact, even for a single marked point case (i.e. for the Landau-Lifshitz equation (\ref{q841})) it is known for
the special coadjoint orbit (\ref{q842}) only. In this case (\ref{q841}) reduces
to (\ref{q850}).

\subsection{1+1 Gaudin model}
Consider the case of arbitrary number of marked points $n$.
Introduce the $U$-$V$ pair similarly to the previous case:
\beq{q879}
\begin{split}
   & U(z) = \sum_{i=1}^{n}L\left(S^{i},z-z_i\right)\\
   & V^a(z) = -c\partial_{z}L\left(S^{a},z-z_a\right) + c L\left(T^{a},z-z_a\right) +  L\left(E\left(S^{a}\right)S^{a},z-z_a\right) + L\left(S^{a}E\left(S^{a}\right),z-z_a\right).
     \end{split}
\eq
\begin{theor}\label{th2}
Suppose $S^a$ is the matrix described by the special coadjoint orbit (\ref{q875}).
The Zakharov-Shabat equations (\ref{q864}) with $U$-matrix (\ref{q862}) and $V$-matrix
(\ref{q869}) provide the following equations of motion:
\beq{q880}
    k\partial_{x}S^{a} = \left[S^{a}, \sum_{i\neq a}^{n}I^{ab}(S^{b}) - T^{a}\right],
\eq
\beq{q881}
\begin{split}
    & \partial_{t_a}S^{a} = k\partial_{x}(E(S^{a})S^{a} + S^{a}E^(S^{a})) + kc\partial_{x}T^{a} + c\left[S^{a},E(T^{a})\right] +
    \\ & +  c\left[E(S^{a}),T^{a}\right] - \frac{2tr(S^{a})}{N}\left[S^{a},J(S^{a})\right]  +
     \\ & +
    \sum_{i\neq a}^{n}\Big(c\left[I^{ai}(S^{i}),T^{a}\right] + S^{a}E(\left[I^{ai}(S^{i}),S^{a}\right]) + E(\left[I^{ai}(S^{i}),S^{a}\right])S^{a} + E(S^{a})\left[I^{ai}(S^{i}),S^{a}\right]\Big) +
     \\ & +
    \sum_{i\neq a}^{n}\Big(\left[I^{ai}(S^{i}),S^{a}\right]E(S^{a}) + S^{a}I^{ai}(\left[S^{i},I^{ia}(S^{a})\right]) + I^{ai}(\left[S^{i},I^{ia}(S^{a})\right])S^{a}\Big)
\end{split}
\eq
and
\beq{q882}
    \begin{split}
        & \partial_{t_a}S^{i} = c \left[S^{i},I^{ia}(T^{a})\right] - \frac{2 tr(S^{a})}{N}\left[S^{i},J^{ia}(S^{a})\right] + \left[S^{i}, I^{ia}(S^{a})\right]I^{ia}(S^{a}) + I^{ia}(S^{a})\left[S^{i}, I^{ia}(S^{a})\right]
    \end{split}
\eq
for $i\neq a$.
\end{theor}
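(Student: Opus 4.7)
The plan is to extend the argument used for Theorem \ref{th1} almost verbatim, exploiting the fact that both the $U$-matrix in (\ref{q879}) and the $V^a$-matrix are linear in the coefficients $L(S^i,z-z_i)$, so that the commutator $[U(z),V^a(z)]$ decomposes into a ``diagonal'' piece involving only $S^a$ and a sum of ``cross'' pieces pairing $S^a$ with $S^i$ for $i\neq a$. Each of these pieces has already been analyzed in the two-pole case, and the $n$-pole answer is obtained by summing over $i\neq a$.

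\textbf{Step 1.} First I would rewrite $V^a(z)$ in the equivalent form suggested by identity (\ref{q858}) (multiplied by $S^a_2S^a_3$ and traced over $23$): the combination $-c\,\partial_zL(S^a,z-z_a)+L(E(S^a)S^a,z-z_a)+L(S^aE(S^a),z-z_a)$ is replaced by $L^2(S^a,z-z_a)-\tfrac{2\tr(S^a)}{N}M(S^a,z-z_a)-\tfrac{1_N}{N}\tr_{23}(m_{23}(0)S^a_2S^a_3)$, exactly as in (\ref{q858}). This makes the commutator computation manageable because the quadratic term in $L(S^a,\cdot)$ combines cleanly with contributions from $U(z)$ through the already established identities.

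\textbf{Step 2.} I would then write out the Zakharov-Shabat equation $\partial_tU(z)-k\partial_xV^a(z)=[U(z),V^a(z)]$ and split the commutator on the right as
\[
[U(z),V^a(z)]=[L(S^a,z-z_a),V^a(z)]+\sum_{i\neq a}[L(S^i,z-z_i),V^a(z)].
\]
For the first summand, the relevant identities are (\ref{q859}), (\ref{q860}) and (\ref{q861}), which govern all products and commutators of objects supported at the single point $z_a$; these give precisely the $i=a$ terms of (\ref{q881}) together with the second-order pole contribution (\ref{q880}) restricted to the $T^a$-part. For each $i\neq a$, the cross commutator is handled by (\ref{q854}) (for the $M(S^a,\cdot)$-piece of $V^a$), (\ref{q856})--(\ref{q857}) (for the $L^2(S^a,\cdot)$ and $L(T^a,\cdot)$ pieces), contributing additional terms at the pole $z=z_a$ (summed into (\ref{q880}) and (\ref{q881})) and a single-pole contribution at $z=z_i$ that produces (\ref{q882}).

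\textbf{Step 3.} Finally I would match pole orders. Comparing the coefficient of $(z-z_a)^{-2}$ yields (\ref{q880}), since the only source of a double pole in $V^a$ is $-c\,\partial_zL(S^a,\cdot)$ and the only source on the right-hand side comes from pairing this term with the sum $\sum_iL(S^i,z-z_i)$ in $U(z)$, picking out $-\sum_{i\neq a}I^{ai}(S^i)$ plus the $L(T^a,\cdot)$ contribution. The coefficient of $(z-z_a)^{-1}$ yields (\ref{q881}), and the coefficient of $(z-z_i)^{-1}$ for each $i\neq a$ yields (\ref{q882}). The terms proportional to $1_N$ arising from $\tr_{23}(m_{23}(z_i-z_a)S^a_2S^i_3)$ and $\tr_{23}(m_{23}(0)S^a_2S^a_3)$ are cancelled by the appropriate generalization of (\ref{q8571}) summed over $i\neq a$, namely the consequence of (\ref{q56}) with $z_a-z_b$ replaced by $z_a-z_i$ and the identity summed in the same pattern.

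The main obstacle is purely bookkeeping: tracking all the cross terms that appear when expanding the sum over $i\neq a$ and verifying that the $1_N$-proportional pieces cancel in the pattern described above. None of the individual identities in (\ref{q854})--(\ref{q861}) needs to be re-proved; the content of the theorem lies in observing that each identity applied in the two-pole case (Theorem \ref{th1}) applies separately to each pair $(a,i)$, and that the equations of motion are simply the appropriate sums. In particular, (\ref{q880}) and the structure of (\ref{q881}) are already dictated by the ansatz for $V^a$, while (\ref{q882}) is the direct analogue of (\ref{q873}) with $b$ replaced by $i$.
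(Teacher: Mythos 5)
Your proposal follows essentially the same route as the paper: the paper proves Theorem \ref{th2} by declaring it analogous to Theorem \ref{th1}, whose proof rewrites $V^a$ via (\ref{q858}), applies the identities (\ref{q854})--(\ref{q857}) and (\ref{q859})--(\ref{q861}) pairwise, cancels the $1_N$-terms via (\ref{q8571}), and reads off (\ref{q880})--(\ref{q882}) from the coefficients of $(z-z_a)^{-2}$, $(z-z_a)^{-1}$ and $(z-z_i)^{-1}$. Your observation that the $n$-pole case is obtained by applying each two-pole identity to every pair $(a,i)$ and summing over $i\neq a$ is exactly the content of the paper's (unwritten) argument.
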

The proof of the above statement is analogues to the Theorem \ref{th1}.

When the condition (\ref{q875}) holds, the equation (\ref{q880}) is solved with respect to $T^a$:
\beq{q8831}
\begin{array}{c}
\displaystyle{
T^{a} = -c^{-2}k\left[S^{a},\partial_{x}S^{a}\right] + \sum_{i\neq a}^{n} I^{ai}(S^{i})\,.
 }
\end{array}
\eq
\paragraph{Minimal orbits.} The obtained equations can be slightly simplified
using additional identities appearing when $r_{12}^{(0)}$ satisfies (\ref{q844}) and
all residues $S^i$ are rank one matrices as in (\ref{q8781})-(\ref{q8782}).
In this case we have (here we have already substituted $T^a$ from (\ref{q8831}))
\beq{q8833}
    \begin{split}
        & \partial_{t_a}S^{a} =  - c^{-1}k^{2} \left[S^{a}, \partial_{x}^{2}S^{a}\right] - \frac{2tr(S^{a})}{N}\left[S^{a}, J(S^{a})\right] + kc\sum_{j\neq a}^{n}\partial_{x}(I^{aj}(S^{j}))+
        \\
         & + 2k\left[E(S^{a}_{x}),S^{a}\right]+
        c\sum_{j\neq a}^{n}\left[S^{a},E(I^{aj}(S^j))\right] + c\sum_{j\neq a}^{n}\left[E(S^{a}),I^{aj}(S^{j})\right] +
         \\ &
         + \sum_{i\neq a}^{n}\Big(-c^{-1}k\left[I^{ai}(S^{i}),\left[S^{a},\partial_{x}S^{a}\right]\right] + S^{a}I^{ai}\left(\left[S^{i},I^{ia}(S^{a})\right]\right)\Big) +
\\
         & +
        \sum_{i\neq a}^{n}\Big(I^{ai}\left(\left[S^{i},I^{ia}(S^{a})\right]\right)S^{a} + 2 E(S^{a})I^{ai}(S^{i})S^{a} -  E(S^{a})S^{a}I^{ai}(S^{i})  -  E\left(S^{a}I^{ai}(S^{i})\right)S^{a}\Big) +\\&+
        c\sum_{i\neq a}^{n}\left[I^{ai}(S^{i}),\sum_{j\neq a}^{n}I^{aj}(S^{j})\right]
    \end{split}
    \eq
and for $i\neq a$:
\beq{q8832}
    \begin{split}
        & \partial_{t_a}S^{i} = c\sum_{j\neq a}^{n}\left[S^{i}, I^{ia}(I^{aj}(S^{j}))\right] - c^{-1}k\left[S^{i},I^{ia}(\left[S^{a},\partial_{x}S^{a}\right])\right] - \frac{2tr(S^{a})}{N}\left[S^{i},M_{ia}(S^{a})\right] + \\ & +
        I^{ia}(S^{a})\left[S^{i}, I^{ia}(S^{a})\right] + \left[S^{i}, I^{ia}(S^{a})\right]I^{ia}(S^{a}).
    \end{split}
\eq

\subsection{Deformation by the twist function}
As was explained in Section 2, $U$-matrix can be multiplied by some rational function.
It comes from solution of (\ref{q310}). The function $k(z)$ may have poles and zeros, which do not
coincide with positions of marked points $z_a$. Then ${\ti U}(z)$ acquires additional poles.
Consider the simplest example.
Let $U(z)$ be the $U$-matrix of the rational Heisenberg model:
\beq{q883}
\begin{array}{c}
\displaystyle{
U(z)=\frac{S}{z-z_1}
 }
\end{array}
\eq
and
\beq{q884}
\begin{array}{c}
\displaystyle{
\frac{k}{k(z)}=\frac{z-w_1}{z-y_1}
 }
\end{array}
\eq
Then the transformed $U$-matrix
\beq{q885}
\begin{array}{c}
\displaystyle{
{\ti U}(z)=\frac{{\ti S}^1}{z-z_1}+\frac{{\ti S}^2}{z-y_1}
 }
\end{array}
\eq
has two poles with linearly dependent residues
\beq{q886}
\begin{array}{c}
\displaystyle{
{\ti S}^1=\frac{z_1-w_1}{z_1-y_1}\,S^1\,,\quad
{\ti S}^1=-\frac{y_1-w_1}{z_1-y_1}\,S^1\,.
 }
\end{array}
\eq
Therefore, the usage of the non-trivial central charge (or the twist function)
adds poles with dependent residues. In the general case we come to 1+1 Gaudin model
with some dependent residues.

\section{Appendix: the Higgs bundles and Hitchin systems}\label{secA}
\def\theequation{A.\arabic{equation}}
\setcounter{equation}{0}

Here we briefly describe main steps of construction of integrable systems through
the Hitchin approach. Main idea is to perform the Hamiltonian reduction starting from
the Higgs field defined on the corresponding Higgs bundle. The moment map
equation is a constrain which projects the Higgs field to the Lax matrix.
Similarly, in the case of the field theory one should change the Higgs bundle to
the affine Higgs bundle. Then we deal with the Lax connection (the $U$-matrix), and it satisfies
a field analogue of the moment map equation.

\subsection{Preliminaries on loop groups and loop algebras}
Let $\gg$ be a simple complex Lie algebra and $L(\gg)=\gg\otimes\mC(y))$, $(y\in \mC^*)$
 is the loop algebra of Laurent polynomials. Notice that in contrast to (\ref{q04}) in this Section we
 use $y\in\mC^*$ for the loop variable, and it is a multiplicative variable. In the rest of the paper we use $x$ -- an additive real valued variable
 as in (\ref{q04}). That is on a unit circle $y=e^{\imath x}$. Let $(\,,\,)$ be
 an invariant form on $\gg$.
 Define the form on $L(\gg)$
  $$
  \lan X\,,\, Y\ran=\oint(X,Y)\frac{dy}y\,.
  $$
Consider its central extension
\beq{la}
\hat{L}(\gg)=\{(X(x),\kappa)\}\,,~~ \kappa\in\mC\,.
\eq
 The commutator of  $\hat{L}(\gg)$ assumes the form
 $$
[(X_1,k_1),(X_2,k_2)]=([(X_1,X_2)]_0,\lan X_1,\p X_2\ran)\,,~~(\p=\imath y\p_y )\,,
 $$
where $[(X_1,X_2)]_0$ is a commutator on $\gg$,
The dual to $\hat{L}(\gg)$ is the Lie coalgebra
\beq{lc}
\hat{L}^*(\gg)=\{\clY=(Y,{k})\sim({k}\p+Y)\}\,.
\eq
 It is defined through the pairing
 \beq{pa}
(\clX,\clY)=\lan X,Y\ran +\kappa{k}\,,
\eq
where ${k}$ is the cocenter.

Let $G$ be a complex Lie group with the Lie algebra $\gg$.
The corresponding loop group $L(G)$ is the map $L(G)\,:\,\mC^*\to G$:
\beq{logr}
 L(G)=G\otimes \mC(y))=\left\{\,\sum_m g_m y^m\,,~g_m\in G\,\right\}\,.
  \eq
It has the central extension
 $\hat L(G)=\{g(x),\zeta\}$  defined by the
  multiplication
\beq{ml}
(g,\zeta)\times(g',\zeta')=\left(gg',\zeta\zeta' {\mathcal C}(g,g')\right),
\eq
where ${\mathcal C}(g,g')$ is a 2-cocycle on $L(G)$
 providing the associativity of the multiplication.

  The adjoint action of $f\in L(G)$ on   $\hat{L}(\gg)$ is defined as
\beq{aa}
\Ad_f\clX=\Ad_f(X,\kappa)=(f Xf^{-1}\,,\,\kappa+\lan f^{-1}\p f,X\ran)\,.
\eq
 Then the coadjoint  action of $L(G)$
 assumes the form
\beq{ca}
\Ad^*_f\clY=\Ad^*_f(Y,{k})=
(f^{-1}Y f+{k} f^{-1}\p f\,,\,{k})\,.
\eq

\subsection{Briefly on Hitchin systems}

Let $\Si$ be a Riemann surface
and $G$ is a simple,
simply-connected complex Lie group with Lie algebra $\gg^\mC$.
Consider a $G$ vector bundle $E$ over $\Si$.

  The operator $D_{\bA}$ defines a holomorphic structure on $E$. Locally we have
  \beq{da}
  D_{\bA}=\p_{\bz}+\bA\,.
  \eq
  The quotient
  \beq{bun}
  Bun(G)=\{ D_{\bA}\}/\clG(G)\,,
  \eq
where
\beq{ggc}
\clG(G)=C^\infty Maps (\Si\to G)\,,
\eq
  is the moduli space of the holomorphic $G$ bundles over $\Si$.

\emph{The Higgs field} $\Phi$ is
a  section of the bundle of endomorphisms $E\to E\otimes\clK_\Si$,
where $\clK_\Si$ is the canonical class.
The $G$\emph{-Higgs bundle} is the pair
  \beq{hb1}
\clH=(D_{\bA}\,,~\Phi)\,.
\eq
It is the cotangent bundle $T^*\clA$ to the affine space of holomorphic structures
$\clA=\{D_{\bA}\}$ on $E$. Therefore, the Higgs bundle can be endowed with the canonical
symplectic form
  \beq{cf}
\Om=  \int_\Si|d^2z|(\de\Phi\stackrel{\wedge}{,}\de \bA)\,,
\eq
where $(~,~)$ is a fixed invariant form on the Lie algebra $\gg^\mC$,
and $\stackrel{\wedge}{,}$ means the inner product in the symplectic form.

The  group $\clG(G)$ (\ref{ggc})
is the group of symplectomorphisms of ({\ref{cf}). The moment map equation
related to  this action in terms of (\ref{hb1}) takes the form
\beq{cm1}
D_{\bA}\Phi=0\,.
\eq
This equation defines the holomorphic structure on the Higgs bundle $\clH$ (\ref{hb1}).
Let $\clC_0$ be the set of solutions of (\ref{cm1}).
The symplectic quotient
\beq{ms}
\clM^H(G)=\clH//\clG(G)\sim\clC_0/\clG(G)
\eq
is the phase space of the Hitchin integrable systems \cite{Hi,Hi2,LOZ} with the Poisson brackets
 coming from ({\ref{cf}).


\subsection{Affine Hitchin systems}

To pass to 2d integrable systems we consider the affine version of the Higgs bundle.
For this purpose we replace the group $G$ on the group  $\hat L(G)$ (\ref{logr}), (\ref{ml}).

Below the set $\clH(G)$ of fields on the affine Higgs bundles is described. As a result
of reduction with respect to gauge symmetries it becomes the phase space of 2d integrable model.

Let us consider an infinite rank $L(G)$-bundle $E^{aff}$  and a line bundle $\clL$ over $\Si$.
Consider also smooth maps $C^\infty(\Om^{(0,1)}(\Si)\to \hat\gg)=\{(\bA(z,\bz,y), \bar \kappa(z,\bz)\}$
(see (\ref{la})).
The components of this map are connection forms on the bundles $E^{aff}$ and $\clL$:
 \beq{ac}
\nabla_{\bA}=\left(
\begin{array}{c}
 \p_{\bz}+\bA(z,\bz,y) \\
 \p_{\bz}+ \bar \kappa(z,\bz)
\end{array}
\right)\,.
 \eq

\paragraph{Affine Higgs connection.}
Consider the map
$C^\infty(\clK_\Si\to \hat L^*(\gg))$ (\ref{lc}) given by
$$
 {k}(z,\bz)\p +\Phi(z,\bz,y)\,,~~\p=\imath y\p_y\,.
$$
It is the affine Higgs connection. The Higgs field $\Phi(z,\bz,y)$ is a  section of the endomorphisms map
$E^{aff}\to E^{aff}\otimes\clK_\Si$
and ${k}(z,\bz)$ is a distribution of $(1,0)$ forms on $\Si$, dual to the space of smooth
 $(0,1)$ forms.  This is the cocentral charge.

Similarly to the previous consideration, define the symplectic form on the fields $(\Phi,{k},\bar A,k)$:
\beq{bom5}
\Om=\frac{1}{\pi}\int_\Si|d^2z|\Bigl(\lan\de\Phi\stackrel{\wedge}{,}\de \bA\ran
+\de{k}\wedge\de\bar \kappa\Bigr)\,.
\eq

\paragraph{Marked points.}
 Introduce two sets of marked points on $\Si$
 \beq{D}
\clD_1= \{z_a\,,~a=1,\ldots,n\}\,,~~\clD_2=\{w_b\,,~b=1,\ldots,m\}\,.
 \eq
 First consider the set $\clD_1$.
Let us assign the coadjoint orbits to the marked points $z_a$ .
  The coadjoint orbits ${\mathcal O}_a={\mathcal O}(p_a^{(0)},c_a^{(0)})$  of the loop group $L(G)$
  are defined as
\beq{c11}
 {\mathcal O}(p_a^{(0)},c_a^{(0)})=\{
 S^a(p_a^{(0)},c_a^{(0)})=g^{-1}p_a^{(0)}g+c_a^{(0)}g^{-1}\p g\,,~g\in L(G)\}\,.
\eq
The orbit (passing through a point $p_a^{(0)}$, and with the central charge $c_a^{(0)}$)
is the coset space ${\mathcal O}(p^{(0)},c^{(0)})\sim L(G)/H$
 for $c^{(0)}\neq 0$,
 and ${\mathcal O}(p^{(0)},0)\sim L(G)/L(H)$, where $H$ is the Cartan subgroup of $G$.
 Thereby  the orbit can be described by the elements of these quotient:
 \beq{g12}
 S^a\to g_a\in L(G)/H\,,~{\rm or}~g_a\in L(G)/L(H)\,.
 \eq
 The invariants defining the orbit ${\mathcal O}(p^{(0)},c^{(0)})$ are the conjugacy classes of the monodromy operator $M$ corresponding to the operator $c^{(0)}\p+S$.
 Therefore, there
is a one-to-one correspondence between the set of $L(G)$-orbits in
the coalgebra $\hat{L}^*(\gg)$ (\ref{lc})
and the set of conjugacy classes in the group $G$.

\paragraph{Symplectic form.}
The symplectic form $\om_a$ on the orbit $\clO_a$ (\ref{c11}) takes the form
\beq{1.6a}
\om_a=\frac{1}{2\pi\imath}
\oint \de\langle S^a(p_a^{(0)},c_a^{(0)}), g^{-1}\de g \rangle\frac{dy}{y}\,.
\eq
The corresponding Poisson brackets are
\beq{pb}
\{S^a_\al(y),S^a_\be(y')\}=\de(y/{y'})\sum\limits_\ga c^\ga_{\al\be}S^a_\ga(x)+c^{(0)}\ka_{\al\be}\p\de(y/{y'})\,,
\eq
where $\ka_{\al\be}$ is the defined above invariant form on $\gg$, $c^\ga_{\al\be}$
are structure constants in $\gg$, and $\al,\be,\ga$ here are indices
in some basis in $\gg$.
The form $\om_a$ (\ref{1.6a}) is invariant under transformations $g\to gf$, $f\in L(G)$. The corresponding moment map
is  $S^a(p_a^{(0)},c_a^{(0)})$.

Define the symplectic form $\Om$ on the space of fields
\beq{cfc1}
\clH(G)=(\bA\,,\,\Phi\,,\,{k}\,,\,\bar \kappa\,,\,\cup_{a=1}^n S^a)
 \,.
\eq
Taking into account (\ref{bom5}) and (\ref{1.6a}) we have
\beq{bomi1}
\Om=\frac{1}{\pi}\int_\Si|d^2z|\Bigl(\lan\de\Phi\stackrel{\wedge}{,}\de \bA\ran
+\de{k}\wedge\de\bar \kappa-\sum_{a=1}^n\om_a\de(z-z_a,\bz-\bz_a)\Bigr)\,,
\eq
$$
\lan\de\Phi\stackrel{\wedge}{,}\de \bA\ran=\f1{2\pi\imath}\oint(\de\Phi\stackrel{\wedge}{,}\de \bA)\frac{dy}y\,.
$$

\paragraph{Gauge symmetries -- symplectomorphisms.}
Define the map
 \beq{stg}
G(\Si)=C^\infty(\Si\to  L(G))=\{f(z,\bz,y)\}\,.
\eq
The gauge group has two components
\beq{gg1}
\hat {\cal G}:=
(G(\Si),\clG_1)\,,~~\clG_1=\exp\Big(\varepsilon_1(z,\bz)\Big)\,,
~\varepsilon_1(z,\bz)\in C^\infty(\Si)\,,
\eq
as well as its Lie algebra
$$
Lie(\hat {\cal G})=L_0\oplus L_1\,,~~L_0=Lie(G(\Si))=\{\ep(z,\bz,y)\}\,,~~
 L_1=C^\infty(\Si)=\{\varepsilon_1(z,\bz)\}\,.
$$
The coalgebras $L^*_0$ and $L^*_1$ are distributions on these spaces.

The actions of  $L_0$ and $L_1$ on the fields take the form
\beq{g0}
\begin{array}{ll}
 1.& \de_\ep \bA(z,\bz,x)=-\p_{\bz}\ep+[\ep,\bA]\,, \\
2. & \de_\ep  \bar \kappa(z,\bz)=\lan \bA\p \ep\ran\,.
\end{array}
\eq
 These formulas are the infinite-dimensional version of (\ref{aa}).
 Also we have
 \beq{g1}
\begin{array}{ll}
1.&  \de_{\varepsilon_{1}}\bA(z,\bz,x) =0\,,\\
 2.& \de_{\varepsilon_{1}}\bar \kappa(z,\bz)= -\p_{\bz}\varepsilon_1\,.
\end{array}
 \eq
 Taking into account the coadjoint action (\ref{ca}) we find
 \beq{g2}
 \de_\ep \Phi=  {k}\p \ep+[\Phi,\ep]\,,~~\de_\ep{k}=0\,,
 \eq
\beq{g5}
 \de_{\varepsilon_1}\Phi=0\,, ~~
 \de_{\varepsilon_1} {k}=0\,.
\eq
Define the action of the gauge group on the orbits variables in terms of the group elements:
\beq{g3}
\de_\ep g_a=g_a\ep(z_a,\bz_a)\,,~~\de_{\varepsilon_1}g_a=0\,.
\eq

\paragraph{Symplectic quotient.}
The moment maps
$\mu_j\,:\,\clH(G)\to Lie^*(\hat {\cal G}^{G})$, $j=1,2$ are defined by the actions
(\ref{g0})-(\ref{g3}):
\beq{mom1}
\mu_0=\p_{\bz}\Phi- {k}\p \bA+[\bA,\Phi]-\sum_{a=1}^nS^a\de(z-z_a,\bz-\bz_a)\in L_0^*\,,
\eq
\beq{mom2}
\mu_1=\p_{\bz}{k}\in L_1^*=(C^\infty)^*(\Si)\,.
\eq
Consider the set of the marked points $\clD_2$ (\ref{D}).
Take
\beq{v2}
\mu_1=\mu_1^0\,,~~\mu_1^0=\sum_{b=1}^m s_b\de(z-w_b,\bar z-\bar w_b)\,,
\eq
where
\beq{v3}
\sum_{b=1}^m s_b=0\,.
\eq
The gauge group $\hat\clG(\Si)$ preserves the values of moments
\beq{me}
1.\ \mu_0=0\,,\qquad 2.\ \mu_1=\mu_1^0\,.
\eq
 In this way we come to the symplectic quotient
$$
\clH(G)//\hat\clG(\Si)=\{(\mu_0=0\,,\,\mu_1=\mu_1^0)/\hat \clG(\Si)\}\,.
$$
This is the set of the gauge equivalent solutions of the moment equations (\ref{me}).


\begin{small}

\end{small}

\end{document}